\newtheorem{lemma}{Lemma}
\newtheorem{theorem}{Theorem}
\newtheorem{problem}{Problem}
\newtheorem{definition}{Definition}
\newtheorem{example}{Example}
\title{Sequential Resource Access: Theory and Algorithm}
\author{
    \IEEEauthorblockN{Lin Chen\IEEEauthorrefmark{1}\IEEEauthorrefmark{5}, Anastasios Giovanidis\IEEEauthorrefmark{2}, Wei Wang\IEEEauthorrefmark{3}, Lin Shan\IEEEauthorrefmark{4}}
    \IEEEauthorblockA{\IEEEauthorrefmark{1}School of Computer Science and Engineering, Sun Yat-sen University, Guangzhou, China, chenlin69@mail.sysu.edu.cn}
    \IEEEauthorblockA{\IEEEauthorrefmark{2}Sorbonne University, CNRS-LIP6, Paris, France, Anastasios.Giovanidis@lip6.fr}
    \IEEEauthorblockA{\IEEEauthorrefmark{3}College of Information Science and Electronic Engineering, Zhejiang University, Hangzhou, China, wangw@zju.edu.cn}
    \IEEEauthorblockA{\IEEEauthorrefmark{4}Dept. of Electrical and Computer Engineering, Stony Brook University, Stony Brook, NY, USA, shan.x.lin@stonybrook.edu}
    \IEEEauthorblockA{\IEEEauthorrefmark{5}Key Laboratory of Machine Intelligence and Advanced Computing, Ministry of Education, China}
}
\begin{document}

\maketitle

\begin{abstract}
We formulate and analyze a generic sequential resource access problem arising in a variety of engineering fields, where a user disposes a number of heterogeneous computing, communication, or storage resources, each characterized by the probability of successfully executing the user's task and the related access delay and cost, and seeks an optimal access strategy to maximize her utility within a given time horizon, defined as the expected reward minus the access cost. We develop an algorithmic framework on the (near-)optimal sequential resource access strategy. We first prove that the problem of finding an optimal strategy is \textsc{NP}-hard in general. Given the hardness result, we present a greedy strategy implementable in linear time, and establish the closed-form sufficient condition for its optimality. We then develop a series of polynomial-time approximation algorithms achieving $(\epsilon,\delta)$-optimality, with the key component being a pruning process eliminating dominated strategies and, thus maintaining polynomial time and space overhead.
\end{abstract}

\section{Introduction}

We consider the following generic resource access problem: a user needs to execute a communication or computing task; there are a set of resources she may access; by accessing resource $i$, she can successfully execute her task with probability $p_i$ and in that case obtains a unit reward; accessing resource $i$ incurs a cost $c_i$ and delay $d_i$; the user seeks a \textit{sequential} resource access strategy maximizing her expected utility, defined as the expected reward minus the cost, within a time horizon $T$ corresponding to the maximal delay she disposes to complete her task.

The above resource access problem arises in a variety of engineering fields, where a decision maker disposes a number of heterogeneous computing, communication, or storage resources, and needs to find an optimal strategy to access them to execute her task. Examples fitting into this formulation include communication and computing task offloading, cached data access, opportunistic forwarding, and user-centric network selection (cf. Section~\ref{sec:formulation} for a detailed description). Theoretically, the intrinsic problem structure we attempt to capture in our formulation is the sequential selection of resources with a hard time constraint. We believe that such formulation, despite being generic, can provide valuable insights in many emerging networking, communication, and computing scenarios.

Furthermore, application examples of our formulation extend well beyond the realm of communication and networking, and include a class of sequential searching problems that can be formulated by the following intuitive example. Given a number of places, each having a certain probability of containing a prize (e.g., a fugitive actively searched by the police), an agent aims at finding the prize by searching the places. Searching a place incurs certain cost and delay. The agent seeks an optimal sequence of places to search in order to maximize the overall probability of finding the prize while limiting the searching cost within a given time horizon.

Motivated by the above observation, we embark in this paper on a systematic analysis of the sequential resource access problem in its generic form. By generic we mean that no specific problem or context is assumed for the analysis. The only important assumption, made mainly for
mathematical tractability, is the independence of success probability among resources. While limiting the applicability of our analysis to some extent, this assumption is justified in many
practical situations, where resources are independent one to the other. In case where resources are correlated among them, our model can be extended by taking into the account such inter-resource correlation by e.g., forming super-resources representing correlated ones. We leave a detailed analysis of the correlated case for future research. Despite the generic nature of our analysis, the algorithms we develop are readily implementable once instantiated with system parameters.  

Our major technical contribution in this paper is a comprehensive algorithmic framework deriving the optimal or near-optimal sequential resource access strategy. To this end, we first analyze the homogeneous case where the access time $d_i$ is the same among resources. We present a greedy strategy that can be implemented in linear time and mathematically establish sufficient conditions for its optimality. We then develop a polynomial-time algorithm that gives an optimal strategy and design a preprocessing procedure to further reduce its complexity. We then turn to the heterogeneous case where the access time may be different among resources. We prove that the problem of finding an optimal strategy is \textsf{NP}-hard. Given its hardness, we develop a series of polynomial-time approximation algorithms approaching an optimum solution in the sense of $(\epsilon,\delta)$-optimality. 

The paper is organized as follows. In Section~\ref{sec:formulation} we formalize the sequential resource access strategy and present some structural properties. In Sections~\ref{sec:homogeneous} and~\ref{sec:heterogeneous}, we develop a set of greedy and approximation strategies for the homogeneous and heterogeneous cases, respectively. 
Section~\ref{sec:simu} presents the simulation results. We review related work in Section~\ref{sec:related_work}. Section~\ref{sec:conclusion} concludes the paper. 

\section{Model and Problem Formulation}
\label{sec:formulation}

We consider the sequential resource access problem formulated in the Introduction. The user disposes a set $\cal N$ of $N$ resources; each resource $i$ is characterized by a triple $(p_i,c_i,d_i)$, known to the user, corresponding to the success probability, access cost, and access delay of the resource. Our goal is to design an optimal resource access strategy, concisely termed as \textit{strategy}, that maximizes the utility of the user within a time horizon $T$.

\subsection{Assumptions}

The formulation of our problem hinges on the following two assumptions.

First, resources are mutually independent. This assumption is justified in many practical situations. For example, resources may map to distinct communication channels where the probability of  accomplishing a transmission in one channel is independent to that of the other; resources may also map to a set of hard disks where their failure probabilities are mutually independent. In case of correlated resources, our model can be extended by taking into the account such inter-resource correlation by, e.g., forming super-resources representing correlated ones.

Second, resources are selected without replacement. That is, we focus on the case where the probability $p_i$ of each resource $\pi_i$ represents the uncertainty about its availability, which is revealed once $\pi_i$ is accessed. This assumption is valid when the environment is relatively stable during the access process, e.g., in the slow fading case, where each resource maps to a wireless channel, with the channel coherence time longer than the time horizon $T$.

\begin{table}
\begin{tabular}{p{0.6cm} p{7.35cm} }
\toprule
& Section~\ref{sec:formulation}: model and problem formulation \\
\toprule
$\cal N$ & Set of accessible resources, $N=|\cal N|$ \\
$p_i$ & Success prob. of resource $i$ \\
$q_i$ & $1-p_i$ \\
$c_i$ & Access cost of resource $i$ \\
$d_i$ & Access delay of resource $i$ \\
$T$ & Maximal delay bound \\
$\pi$ & Resource access strategy, $\pi=(\pi_1,\cdots,\pi_n)$ \\
$n$ & Number of resources in $\pi$, $n=|\pi|$ \\
$d(\pi)$ & Access delay of strategy $\pi$, $d(\pi)=\sum_{i=1}^{n} d_{\pi_i}$ \\
$\pi^*$ & Optimal feasible resource access strategy \\
$\Pi$ & Set of all resource access strategies \\
$\Pi_f$ & Set of all feasible resource access strategies \\
$U(\pi)$ & Utility function \\
$V(\pi)$ & Pseudo-cost function, $V(\pi)=1-U(\pi)$ \\
\toprule
& Section~\ref{sec:homogeneous}: the homogeneous case \\
\toprule
$y_i$ & Rank of $q_i$ among all $q$'s \\
$z_i$ & Rank of $c_i$ among all $c$'s \\
$\widehat{\pi}$ & Resource access strategy output by our algorithms \\
\toprule
& Section~\ref{sec:heterogeneous}: the heterogeneous case \\
\toprule
$c_{min}$ & $\min_{i\in {\cal N}} c_i$ \\
$d_{min}$ & $\min_{i\in {\cal N}} d_i$, normalized to $1$ \\
$\widehat{d}_i$ & Discretized access delay \\
$\widehat{V}$ & Discretized pseudo-cost \\
$\widehat{d}(\pi)$ & Discretized access delay of $\pi$, $\widehat{d}(\pi)=\sum_{i=1}^{|\pi|} \widehat{d}_{\pi_i}$ \\
$\lambda$ & Scaling parameter corresponding to access time \\
$\mu$ & Scaling parameter corresponding to pseudo-cost \\
\bottomrule
\end{tabular}
\label{tab:notation}
\caption{Main notations}
\end{table}

\subsection{Sequential Resource Access Problem Formulation}

Mathematically, a strategy $\pi$ can be defined as an ordered set of resources $\{\pi_1,\cdots,\pi_n)$, $\pi_i\in{\cal N}$ for $1\le i\le n$, that the user accesses sequentially. We denote the number of resources accessed in $\pi$ by $n$. As resources are selected without replacement, we have $n\le N$. We are particularly interested in the case where $n$ is significantly smaller than $N$, thus requiring a carefully designed access strategy. 

We define the following function $U({\pi})$ to denote the user's utility as a function of her strategy $\pi$. 
$$U(\pi)\triangleq R(\pi)-C(\pi).$$
Specifically, $U({\pi})$ is defined as the expected reward minus the cost.
$R(\pi)\triangleq 1-\prod_{i=1}^n (1-p_{\pi_i})$ denotes the expected normalized reward, where the user gets a unit reward (e.g., one euro) if she successfully executes her task before the deadline.
\begin{multline*}C(\pi)\triangleq \sum_{i=1}^n \left[\prod_{j=1}^{i-1} (1-p_{\pi_j})\right] p_{\pi_i} \left(\sum_{k=1}^i c_{\pi_k}\right) \\ 
+\left[\prod_{j=1}^n (1-p_{\pi_j})\right]\left(\sum_{k=1}^n c_{\pi_k}\right)
\end{multline*}
denotes the expected normalized cost (e.g., also in euro), where $\left[\prod_{j=1}^{i-1} (1-p_{\pi_j})\right] p_{\pi_i}$ is the probability that the user fails in executing her task with the resources $\pi_1$ to $\pi_{i-1}$ and succeeds with resource $i$ with $\sum_{k=1}^i c_{\pi_k}$ being the related cost, $\prod_{j=1}^n (1-p_{\pi_j})$ is the probability of failure with all the $n$ resources in $\pi$ with $\sum_{k=1}^n c_{\pi_k}$ being the related cost. 

For ease of
presentation, we define a pseudo-cost function
$$V(\pi) \triangleq 1-U(\pi).$$
The problem of maximizing the utility function $U(\pi)$ maps to minimizing the pseudo-cost function $V(\pi)$. We next give the definition of feasible strategy and formulate the sequential resource access problem. 

\begin{definition}[Feasible strategy]
For any strategy $\pi\triangleq\{\pi_i\}_{i=1}^n$, let $d(\pi)\triangleq \sum_{i=1}^n d_{\pi_i}$ denote the access delay of $\pi$. We call $\pi$ feasible if its access delay does not exceeds $T$, i.e., $d(\pi)\le T$. We denote $\Pi_f$ the set of all feasible strategies.
\end{definition}

\begin{example}
The user disposes $100$ resources, half with access delay $d_i=1$, termed as type-I resources, half with $d_i=2$, termed as type-II resources, and finite-horizon $T=3$. The feasible strategy set $\Pi_f$ consists of the following categories of strategies: (1) accessing $1$, $2$, or $3$ type-I resource(s), (2) accessing $1$ type-II resource, (3) accessing $1$ resource per type.
\end{example}

\begin{problem}[Sequential resource access]
\label{def:pb}
The sequential resource access problem seeks a feasible strategy $\pi$ minimizing the pseudo-cost function $V(\pi)$, i.e., $\min_{\pi\in \Pi_f} V(\pi)$.
\end{problem}

Our problem is by nature a combinatorial optimization problem. Consider a degenerated case with zero access cost. The problem becomes purely combinatorial and can be algebraically formulated as the following Knapsack problem.
$$\min_{\pi} \sum_{i\in\pi} \log(1-p_i),$$
subject to the Knapsack constraint $\min_{\pi} \sum_{i\in\pi} d_i \le T$. The order of accessing resources does not matter in this case. On the other hand, in the generic setting, the user may tend to access more reliable resources with high $p_i$'s; these resources may also incur larger access delay and cost; therefore, she needs to strike a balance among the success probability, access cost, and delay. Moreover, the order of the accessed resources is also important and needs to be optimized.

\subsection{Applicability of Our Problem Formulation}

Our generic formulation of the sequential resource access problem is readily applicable in a wide range of resource access problems in emerging communication and computing applications. Below we give three concrete examples.

\textbf{Data Access in Network Caching}.
Caching is widely deployed in emerging networking systems such as 5G edge and in-network caches~\cite{6736746}, content delivery networks (CDN)~\cite{201471}. In network caching systems, a cache can be regarded as a data store holding a subset of data that may be accessed by users. 
Intuitively, knowing which item is stored in each data store can significantly improve user experience. However, maintaining such information may be too expensive. As an alternative solution, it is more practical to maintain an approximate catalog of items at each data store based on compact data structures such as Bloom filters~\cite{10.1145/362686.362692}. The price to pay for the space compactness is the well-known false positive, where a Bloom filter returns a positive response while the corresponding data item is not stored at the data store. The false positive rate of a Bloom filter of size $m$ and $k$ hash functions storing $n$ data items is $\left[1-(1-1/m)^{kn}\right]^k$, approximately $\left(1-e^{-kn/m}\right)^k$. 

Now consider the the situation where a user needs to fetch a data item. There are a number of candidate data stores returning positive responses after checking the corresponding Bloom filters. Accessing each data store $i$ incurs a cost $c_i$ and delay $d_i$. The user disposes time $T$ to fetch the data item. She gets a reward $r_1$ if successfully fetching it by the deadline, and pays a penalty $r_2$ otherwise. We can formulate two optimization problems faced by the user. In the first problem, investigated in~\cite{8737427}, the user can access multiple data stores and seeks an optimal set of data stores to access simultaneously. The second problem captures the situation where the user is limited to access one cache each time, but can perform multiple queries sequentially as long as the total delay does not exceed $T$. Compared to accessing multiple caches simultaneously, sequential access can reduce the user's total access cost. The sequential data access problem can be formulated by our sequential resource access problem by mapping the false positive rate of the Bloom filter of cache $i$ to $1-p_i$, and normalizing the user's reward, penalty, and cost. 

\textbf{Interest Forwarding in Information Centric Networks (ICN)}.
In ICNs~\cite{ZHANG20133128}, a content is typically divided into chunks. Each chunk is addressed by a unique ID and may have many identical cached copies in the ICN routers across the Internet. A chunk is located and requested by forwarding the so-called \textit{interests}. A user can forward her interest to one or more neighbor ICN routers. If there is no bandwidth or other cost limitation, the user can forward her interest to all available neighbors. However, if there is a bandwidth limitation, or the user has to pay for the interest or delivered content, then she needs to carefully choose which neighbors to forward her interest and in what order, rather than simple flooding~\cite{6486116}. By mapping the neighbor set to $\cal N$, the probability that neighbor $i$ can return a copy of the searched chunk to $p_i$, the average response delay and the access cost to $d_i$ and $c_i$ respectively, the ICN interest forwarding problem can be cast to the sequential resource access problem. 

\textbf{Opportunistic Packet Forwarding}.
In opportunistic packet forwarding, a user needs to decide the sequence of invoking a set of potential forwarders to transmit a data packet. Each forwarder $i$ has a certain probability $p_i$ of successfully executing the transmission task depending on its channel condition, which is assumed to be stable during the considered time horizon. Invoking forwarder $i$ incurs a cost $c_i$ (e.g., in terms of energy consumption) and delay $d_i$. The user seeks an optimal sequential strategy to invoke a subset of forwarders to maximize the packet delivery rate within the tolerable delay $T$ by taking into account the related cost. This optimal opportunistic forwarding problem also fits in our formulation.

\subsection{Structural Properties of Optimal Feasible Strategy}

We conclude this section by showing the following structural
properties of $V(\pi)$ and any optimal feasible strategy $\pi^*$. To streamline our presentation and due to page limit, readers are referred to the anonymous technical report~\cite{techreport} for the proof of all the lemmas and theorems.

\begin{lemma}[Structural properties of $V(\pi)$]
\label{lemma:v}
Define $q_i\triangleq 1-p_i$, $\forall i\in{\cal N}$. The following properties hold\footnote{To make the notation concise, we denote $\prod_{j=1}^i q_{\pi_j}=1$ for $i<1$.}:
\begin{enumerate}
    \item $\displaystyle V(\pi)=\prod_{i=1}^n q_{\pi_i} + \sum_{i=1}^n \left(\prod_{j=1}^{i-1}q_{\pi_j}\right)c_{\pi_i}$;
    \item If $q_{\pi_i}+c_{\pi_i}\le 1$ for $1\le i\le n$, then $V(\pi)\le 1$;
    \item Given any strategy $\pi$, if there exists a resource $\pi_k\in\pi$ such that $q_{\pi_k}+c_{\pi_k}>1$, it holds that $V(\pi)>V(\pi_{-k})$, where $\pi_{-k}$ denotes the strategy by removing $\pi_k$ from $\pi$.
\end{enumerate}
\end{lemma}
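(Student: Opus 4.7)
\emph{Part 1.} The plan is to substitute $R(\pi)=1-\prod_{i=1}^n q_{\pi_i}$ into $V(\pi)=1-U(\pi)=1-R(\pi)+C(\pi)$, reducing the claim to showing $C(\pi)=\sum_{i=1}^n\bigl(\prod_{j=1}^{i-1} q_{\pi_j}\bigr)c_{\pi_i}$. I would do this by swapping the order of summation in the definition of $C(\pi)$ so as to collect the coefficient of each $c_{\pi_k}$. Using $p_{\pi_i}=1-q_{\pi_i}$, each term $\bigl(\prod_{j=1}^{i-1} q_{\pi_j}\bigr)p_{\pi_i}$ rewrites as $\prod_{j=1}^{i-1} q_{\pi_j}-\prod_{j=1}^{i} q_{\pi_j}$; summing from $i=k$ to $n$ telescopes, and the remaining tail term $\prod_{j=1}^n q_{\pi_j}$ in $C(\pi)$ cancels the final piece, leaving exactly $\prod_{j=1}^{k-1} q_{\pi_j}$ as the coefficient of $c_{\pi_k}$.

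\emph{Part 2.} The key observation, immediate from the closed form in Part 1, is the recursion $V(\pi)=c_{\pi_1}+q_{\pi_1}\,V\bigl((\pi_2,\ldots,\pi_n)\bigr)$, with the convention $V(\emptyset)=1$. I would then proceed by induction on $n=|\pi|$: the base case $n=0$ is trivial; for the inductive step, the hypothesis $V((\pi_2,\ldots,\pi_n))\le 1$ together with $c_{\pi_1}+q_{\pi_1}\le 1$ yields $V(\pi)\le c_{\pi_1}+q_{\pi_1}\cdot 1\le 1$.

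\emph{Part 3.} The plan is to compute $V(\pi)-V(\pi_{-k})$ explicitly using the same recursion. Let $V_i$ denote the pseudo-cost of the suffix $(\pi_i,\ldots,\pi_n)$, so $V_{n+1}=1$ and $V_i=c_{\pi_i}+q_{\pi_i}V_{i+1}$. Because $\pi_{-k}$ agrees with $\pi$ on positions $1,\ldots,k-1$ and, after the removal of $\pi_k$, jumps directly to the suffix starting at position $k+1$, iterating the recursion backward from position $k-1$ peels off the common factor $\prod_{j=1}^{k-1} q_{\pi_j}$ and gives
\begin{equation*}
V(\pi)-V(\pi_{-k})=\Bigl(\prod_{j=1}^{k-1}q_{\pi_j}\Bigr)\bigl(c_{\pi_k}-p_{\pi_k}V_{k+1}\bigr).
\end{equation*}

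The only delicate point, and the main obstacle, is establishing positivity of the bracketed factor. The hypothesis $q_{\pi_k}+c_{\pi_k}>1$ gives $c_{\pi_k}>p_{\pi_k}$, so positivity follows once we know $V_{k+1}\le 1$. The cleanest way to secure this is to take $\pi_k$ to be the right-most resource in $\pi$ satisfying $q+c>1$: then every resource in the suffix $(\pi_{k+1},\ldots,\pi_n)$ obeys $q+c\le 1$, so Part 2 applied to this suffix yields $V_{k+1}\le 1$, and hence $c_{\pi_k}>p_{\pi_k}\ge p_{\pi_k}V_{k+1}$, closing the argument. Iterating this removal then shows that no optimal strategy can contain a resource with $q+c>1$.
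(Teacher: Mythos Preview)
Your Parts 1 and 2 are correct and parallel the paper closely: the paper telescopes the cost sum directly, whereas you package the same telescoping through the one-step recursion $V(\pi)=c_{\pi_1}+q_{\pi_1}V\bigl((\pi_2,\dots,\pi_n)\bigr)$ followed by an induction. The underlying content is identical.

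For Part 3 you diverge from the paper, and your route is in fact the more careful one. The paper computes $V(\pi)-V(\pi_{-k})$ directly from the closed form, but in writing out $V(\pi_{-k})$ it leaves the factor $q_{\pi_k}$ inside the tail products $\prod_{j=1}^{i-1}q_{\pi_j}$ for $i>k$; with that slip the tail cost terms cancel between $V(\pi)$ and $V(\pi_{-k})$, leaving only $(q_{\pi_k}-1)\prod_{i\ne k}q_{\pi_i}+\bigl(\prod_{j<k}q_{\pi_j}\bigr)c_{\pi_k}$, from which positivity follows easily. Your recursion yields the correct identity
\[
V(\pi)-V(\pi_{-k})=\Bigl(\prod_{j=1}^{k-1}q_{\pi_j}\Bigr)\bigl(c_{\pi_k}-p_{\pi_k}\,V_{k+1}\bigr),
\]
and you rightly flag that $c_{\pi_k}>p_{\pi_k}$ alone does not force the bracket to be positive unless $V_{k+1}\le 1$. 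Indeed Part~3 as literally stated can fail: with $\pi=(\pi_1,\pi_2)$, $(q_{\pi_1},c_{\pi_1})=(0.1,0.95)$ and $(q_{\pi_2},c_{\pi_2})=(0.99,1)$, one has $q_{\pi_1}+c_{\pi_1}>1$ yet $V(\pi)=1.149<1.99=V(\pi_{-1})$. Your fix---take $\pi_k$ rightmost so that Part~2 applies to the suffix and gives $V_{k+1}\le 1$, then iterate the removal---is exactly the right repair, and it delivers precisely the consequence the paper actually uses downstream, namely that an optimal strategy never contains a resource with $q_i+c_i>1$.
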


The first property of Lemma~\ref{lemma:v} demonstrates that  $V(\pi)$ consists of two terms: (1) the expected normalized loss due to failing to execute the task $\prod_{i=1}^n q_{\pi_i}$, (2) the expected aggregated access cost $\sum_{i=1}^n \left(\prod_{j=1}^{i-1}q_{\pi_j}\right)c_{\pi_i}$. The second property can be explained intuitively. Upon accessing each resource $\pi_i$, the user gets an expected reward $p_{\pi_i}$, with an incurred access cost $c_{\pi_i}$. If the expected reward outweighs the cost for each resource $\pi_i$, i.e., $p_{\pi_i}\ge c_{\pi_i}$, the global utility $U(\pi)$ is logically non-negative. It follows that $V(\pi)=1-U(\pi)\le 1$. Note that $p_{\pi_i}=1-q_{\pi_i}$, a sufficient condition to achieve $V(\pi)\le 1$ is $q_{\pi_i}+ c_{\pi_i}\le 1$, as stated in the second part of Lemma~\ref{lemma:v}. The third property demonstrates that any rational user never accesses any resource $i$ where $q_i+c_i>1$, as simply removing it improves the performance. We can thus safely focus on the case where $q_i+c_i\le 1$ holds for $1\le i\le N$.

\begin{lemma}[Structural properties of optimal strategy $\pi^*$]
Let $\pi^*$ denote an optimal strategy, it holds that $p_{\pi_i^*}/c_{\pi_i^*}$ is non-increasing in $i$, i.e., $p_{\pi_i^*}/c_{\pi_i^*}\ge p_{\pi_{i+1}^*}/c_{\pi_{i+1}^*}$, $1\le i\le |\pi^*|-1$.
\label{lemma:optimal_property}
\end{lemma}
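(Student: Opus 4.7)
The plan is a classical exchange (adjacent-swap) argument built on top of the closed-form expression for $V(\pi)$ provided by the first property of Lemma~\ref{lemma:v}. I will suppose for contradiction that some adjacent pair in an optimal $\pi^*$ violates the claimed order, swap that pair, and show that the resulting strategy $\pi'$ remains feasible and strictly decreases $V$.

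First I would observe that the leading term $\prod_{i=1}^n q_{\pi_i}$ of $V(\pi)$ depends only on the multiset $\{\pi_1,\ldots,\pi_n\}$ and not on its ordering; the same holds for the total delay $d(\pi)=\sum_i d_{\pi_i}$. Hence any permutation of a feasible strategy is again feasible, and the effect of a reordering on $V(\pi)$ is entirely captured by the second summand $S(\pi)\triangleq \sum_{i=1}^n \bigl(\prod_{j=1}^{i-1} q_{\pi_j}\bigr) c_{\pi_i}$.

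Next, suppose toward a contradiction that there is an index $i$ with $p_{\pi_i^*}/c_{\pi_i^*} < p_{\pi_{i+1}^*}/c_{\pi_{i+1}^*}$, and let $\pi'$ be obtained from $\pi^*$ by swapping positions $i$ and $i+1$. Set $P\triangleq \prod_{j=1}^{i-1} q_{\pi_j^*}$ (with the convention $P=1$ for $i=1$). The terms of $S$ at positions $1,\ldots,i-1$ are identical in $\pi^*$ and $\pi'$, and the terms at positions $k\ge i+2$ are also identical because the prefix product $\prod_{j=1}^{k-1} q_{\pi_j}$ involves $q_{\pi_i}$ and $q_{\pi_{i+1}}$ symmetrically. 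A direct substitution for positions $i$ and $i+1$ then yields
$$V(\pi^*)-V(\pi') \;=\; P\bigl[c_{\pi_i^*}(1-q_{\pi_{i+1}^*}) - c_{\pi_{i+1}^*}(1-q_{\pi_i^*})\bigr] \;=\; P\bigl[c_{\pi_i^*}\,p_{\pi_{i+1}^*}-c_{\pi_{i+1}^*}\,p_{\pi_i^*}\bigr],$$
which is strictly positive under our supposition, contradicting the optimality of $\pi^*$.

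The only subtlety, rather than a genuine obstacle, is the degenerate case $P=0$, i.e., some $q_{\pi_j^*}=0$ with $j<i$: then $\pi_j^*$ succeeds deterministically and every subsequent resource contributes $0$ to both terms of $V(\pi^*)$, so truncating $\pi^*$ at position $j$ preserves both feasibility and optimality while strictly shortening the strategy, and the claim follows by induction on $|\pi^*|$. Zero-cost resources are handled by the natural convention $p_{\pi_i^*}/c_{\pi_i^*}=+\infty$ when $c_{\pi_i^*}=0$. I expect the whole argument to reduce to this one-line swap identity once Lemma~\ref{lemma:v}(1) is in hand.
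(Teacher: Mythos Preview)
Your proposal is correct and follows essentially the same adjacent-swap argument as the paper: assume a violating index, swap positions $i$ and $i+1$, and obtain the identity $V(\pi^*)-V(\pi')=\bigl(\prod_{j<i}q_{\pi_j^*}\bigr)\bigl[c_{\pi_i^*}p_{\pi_{i+1}^*}-c_{\pi_{i+1}^*}p_{\pi_i^*}\bigr]>0$, contradicting optimality. Your extra remarks on feasibility being permutation-invariant and on the degenerate cases $P=0$ and $c_{\pi_i^*}=0$ are sound refinements that the paper's proof leaves implicit.
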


Lemma~\ref{lemma:optimal_property} demonstrates that, once the set of resources to access at $\pi^*$ is determined, it suffices to access them in decreasing order of $p_i/c_i$. In other words, our problem can be transformed to finding an optimal set of resources to access, which is by nature a combinatorial optimization problem.

\section{The Homogeneous Case}
\label{sec:homogeneous}

This section focuses on the homogeneous case where the access delay $d_i$ is identical among resources. To make the notation concise without losing generality, we set $d_i=1, \forall i\in{\cal N}$, i.e., the user can access at most $T$ resources. Our motivation of starting with the homogeneous case is to tackle the problem in a progressive way. As we will demonstrate in this section, analyzing the homogeneous case allows us to obtain more insights on the structure of an optimal strategy, which are useful in the study of the heterogeneous case. We emphasize that the homogeneity only concerns the access delay. Other parameters, such as success probability and access cost, may still be heterogeneous among resources. 

\subsection{A Greedy Strategy Based on $p_i/c_i$}

We first investigate a greedy strategy sequentially accessing $T$ resources\footnote{By sequentially accessing a set of resources, we mean by accessing sequentially those resources until a success or the end of time horizon $T$.} 
in the decreasing order of $p_i/c_i$, as formalized in Definition~\ref{def:greedy}. Our greedy strategy is motivated by the structural property of an optimal strategy in Lemma~\ref{lemma:optimal_property}. From an economic angle, the greedy strategy sequentially chooses $T$ resources by decreasing ratio of success probability (representing profit in certain sense) to access cost. To make our analysis concise, we assume that for any pair of resources $i$ and $j$, $c_i\ne c_j$.\footnote{In case of tie where $c_i=c_j$, we add a small quantity $\epsilon$ to either $c_i$ or $c_j$ to break the tie, which leads to at most $\epsilon$ in the utility of the optimal strategy. Our analysis can thus be extended in the generic case.}

\begin{definition}[Greedy strategy]
The greedy strategy consists of sequentially accessing $T$ resources by decreasing ratio $p_i/c_i$, in case of tie choosing the resources by increasing $c_i$. Mathematically, by sorting the resources such that for each $1\le i\le N-1$ either (1) ${p_i}/{c_i}>{p_{i+1}}/{c_{i+1}}$ or (2) ${p_i}/{c_i}={p_{i+1}}/{c_{i+1}}$ and $c_i<c_{i+1}$, the greedy strategy sequentially accesses the first $T$ resources.
\label{def:greedy}
\end{definition}

In Theorem~\ref{theorem:optimality_greedy}, we establish the sufficient condition under which the greedy strategy is optimal. In the sequel analysis we assume that the resources are sorted according to Definition~\ref{def:greedy}.

\begin{theorem}[Optimality condition of greedy strategy]
If $c_i< c_{i+1}$ and $\frac{p_i-p_{i+1}}{c_i-c_{i+1}}< 1$ hold for $1\le i\le N-1$, then the greedy strategy is the only optimal strategy.
\label{theorem:optimality_greedy}
\end{theorem}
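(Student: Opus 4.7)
The argument I propose is an exchange-and-extend approach that leverages Lemma~\ref{lemma:optimal_property}. Because that lemma already forces any optimal $\pi^*$ to list its resources in decreasing $p_i/c_i$ order---which is precisely the order used by the greedy strategy---the problem collapses to showing that the optimal \emph{subset} of resources is exactly $\{1,\ldots,T\}$, the top-$T$ prefix of the sort.

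Suppose, toward contradiction, that an optimal $\pi^*$ has underlying set $S^*\neq\{1,\ldots,T\}$. Let $k$ be the smallest index in $\{1,\ldots,T\}\setminus S^*$, so that $\{1,\ldots,k-1\}\subseteq S^*$ and $k\notin S^*$. I would split into two complementary cases: either (i) some element $a_k>k$ belongs to $S^*$, in which case, by the sort ordering, $a_k$ occupies position $k$ of $\pi^*$, and I form $\pi'$ by swapping $a_k$ out and $k$ in; or (ii) $S^*\subseteq\{1,\ldots,k-1\}$ and $|\pi^*|<T$, in which case I form $\pi'$ by appending $k$ to the end of $\pi^*$. In either case the goal is to show $V(\pi')<V(\pi^*)$, contradicting optimality.

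For the swap case (i), inserting the expression from part~1 of Lemma~\ref{lemma:v} and telescoping the common prefix $(1,\ldots,k-1)$ and the common tail $\tau=(a_{k+1},\ldots,a_T)$ out of both strategies, the difference collapses to
\[
V(\pi^*)-V(\pi')=\Bigl(\prod_{i=1}^{k-1}q_i\Bigr)\bigl[(q_{a_k}-q_k)\,V(\tau)+(c_{a_k}-c_k)\bigr],
\]
using the identification $V(\tau)=\prod_{i=k+1}^T q_{a_i}+\sum_{i=k+1}^T\bigl(\prod_{j=k+1}^{i-1}q_{a_j}\bigr)c_{a_i}$ from part~1 of the same lemma. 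The prefix product is strictly positive, so only the bracket matters. The easy sub-case is $p_k\ge p_{a_k}$: both summands are then non-negative and the second is strictly positive because $c_{a_k}>c_k$ (obtained by telescoping the hypothesis $c_i<c_{i+1}$ across $i=k,\ldots,a_k-1$). The main obstacle is the opposite sub-case $p_{a_k}>p_k$, where the first summand becomes negative and must be outweighed. This is where both hypotheses genuinely earn their keep: telescoping the hypothesis $(p_{i+1}-p_i)<(c_{i+1}-c_i)$ across the same indices yields the strict comparison $p_{a_k}-p_k<c_{a_k}-c_k$; independently, Lemma~\ref{lemma:v}(2) together with the standing assumption $q_i+c_i\le 1$ (justified by Lemma~\ref{lemma:v}(3)) gives $V(\tau)\le 1$. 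Combining, $(p_{a_k}-p_k)V(\tau)\le p_{a_k}-p_k<c_{a_k}-c_k$, so the bracket is strictly positive, whence $V(\pi^*)>V(\pi')$.

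The extension case (ii) is a direct calculation: appending $k$ at the end of $\pi^*$ changes $V$ by $\bigl(\prod_{i\in S^*}q_i\bigr)(q_k+c_k-1)$, which is strictly negative whenever $p_k>c_k$. Since the theorem's hypotheses force $p_i-c_i$ to be strictly monotone decreasing in $i$ and the standing assumption keeps it non-negative, at most one index can satisfy $p_i=c_i$, so the missing $k$ can be chosen to satisfy $p_k>c_k$ strictly. Either case thus produces a strict improvement over $\pi^*$, contradicting its optimality and simultaneously establishing both that $\pi_g$ is optimal and that it is the unique optimum.
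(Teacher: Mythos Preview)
Your approach is essentially the paper's: both reduce via Lemma~\ref{lemma:optimal_property} to showing the optimal set equals $\{1,\ldots,T\}$, perform a single-element swap, and control the resulting difference by bounding the tail $V(\tau)\le 1$ via Lemma~\ref{lemma:v}(2) together with the telescoped hypothesis $p_j-p_i<c_j-c_i$. Your explicit extension case $|\pi^*|<T$ is a welcome addition (the paper glosses over it), though the phrase ``$k$ can be chosen'' is imprecise---$k$ is determined by your setup, and the needed strictness $p_k>c_k$ follows automatically from the strict monotonicity of $p_i-c_i$ except in the degenerate boundary case $k=N$ with $p_N=c_N$.
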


By treating $p_i$ as a function of  $c_i$, the condition $\frac{p_i-p_{i+1}}{c_i-c_{i+1}}< 1$ can be essentially regarded as the discrete version of the economic property on the marginal utility $\Delta p/\Delta c$. 
What we essentially demonstrate is that, if $p_i$ does not increase as much as $c_i$ among resources, then the greedy strategy is optimal.

Despite our efforts in characterizing the optimality of the greedy strategy, the optimality condition established in Theorem~\ref{theorem:optimality_greedy} may be too stringent in many cases, and the greedy strategy may be far from optimal. To illustrate this, we consider an example where $N=2$, $p_1=0.2$, $c_1=0.1$, $p_2=0.9$, $c_2=0.5$, and $T=1$; clearly the greedy strategy accesses resource $1$ leading to utility $0.1$; however, the optimal strategy is to access resource $2$ leading to utility $0.4$.

Motivated by the above analysis, we proceed to derive the optimal strategy in the generic case.

\subsection{Optimal Strategy based on Dynamic Programming}


As in the previous subsection, we sort resources by decreasing $p_i/c_i$. By Lemma~\ref{lemma:optimal_property}, an optimal strategy corresponds to accessing a subset of resources in that order. It remains to find the subset. To this end, we define an auxiliary function $\widehat{V}(r,l)$ ($0\le r\le T$, $r\le l\le N$) to denote the minimal expected pseudo-cost by accessing at most $r$ among the first $l$ resources, i.e., $\widehat{V}(r,l)\triangleq \min_{\pi\subseteq {\cal N}_l,|\pi|=r} V(\pi)$, where ${\cal N}_l$ denotes the subset of $\cal N$ containing the first $l$ resources.

The optimal algorithm we develop is based on dynamic programming, hinging on the following recursive property:
\begin{multline}
\widehat{V}(r,l)=\min\{c_l+q_l\cdot \widehat{V}(r-1,l-1),\widehat{V}(r,l-1)\}, \\ 
1\le r\le T, \ r\le l\le N.
\label{eq:dp}
\end{multline}
The above property follows from the observation that $\widehat{V}(r,l)$ is the minimum of the following two strategies: (1) an optimal strategy that accesses $(r-1)$ resources among the first $(l-1)$ resources followed by accessing resource $l$, (2) an optimal strategy that accesses $r$ resources among the first $(l-1)$ resources without accessing resource $l$. The border values are given by $\widehat{V}(0,l)=1$, $0\le l\le N$.

The pseudo-code of our optimal algorithm is described in Algorithm~\ref{alg:baseline}, essentially consisting of two iterations (except the first \textbf{for} iteration for initialization). The first iteration computes the values of $\widehat{V}(r,l)$. Note that we only compute $\widehat{V}(r,l)$ for $r=1$ to $T$ and $l=r$ to $(N-T+r)$ for a given $r$, because these are sufficient to calculate the optimal cost $\widehat{V}(T,N)$. This can be physically implemented by a $T\times (N-T+1)$ array. The second iteration starts from $(T,N)$ and traces back to $(1,1)$ to derive the optimal strategy $\pi^*$. During this process, $r$ traces the current index of the resource accessed in $\pi^*$. Both the time and space complexity of Algorithm~\ref{alg:baseline} is $O(NT)$, or more precisely, $O((N-T)T)$.

\begin{algorithm}
\caption{Finding optimal strategy: homogeneous case}
\label{alg:baseline}
\begin{algorithmic}[1]
\State \textbf{Input}: $T$, $\{q_i,c_i\}_{1\le i \le N}$
\State \textbf{Output}: an optimal strategy $\pi^*=\{\pi^*_r\}_{1\le r\le T}$

\vspace{0.5em}
\State Sort resources by decreasing $p_i/c_i$
\For{$l=1$ \textbf{to} $N$}
    \State $\widehat{V}(0,l)\leftarrow 1$
\EndFor

\vspace{0.5em}
\For{$r=1$ \textbf{to} $T$, $l=r$ \textbf{to} $N-T+r$}
    \If{$c_l+q_l\cdot \widehat{V}(r-1,l-1)\le \widehat{V}(r,l-1)$}
        \State $\widehat{V}(r,l)\leftarrow c_l+q_l\cdot \widehat{V}(r-1,l-1)$
    \Else
        \State $\widehat{V}(r,l)\leftarrow \widehat{V}(r,l-1)$
    \EndIf
\EndFor

\vspace{0.5em}
\State $r\leftarrow T$
\For{$l=N$ \textbf{to} $1$}
    \If{$c_l+q_l\cdot \widehat{V}(r-1,l-1)\le \widehat{V}(r,l-1)$}
        \State $\pi_r^*\leftarrow l$
        \State $r\leftarrow r-1$
    \EndIf
\EndFor
\end{algorithmic}
\end{algorithm}

Finding an optimum strategy can be cast to finding a path between the root of a tree to a leaf. This formulation helps us gain more insights on the problem. Specifically, we build a graph $G\triangleq ({\cal V}, {\cal E})$, where the set of vertexes ${\cal V}\triangleq \{(r,l)\}$, $0\le r\le T, r\le l\le N$. We add an edge between the vertexes $(r,l)$ and $(r-1,l-1)$ if $c_l+q_l\cdot \widehat{V}(r-1,l-1)\le \widehat{V}(r,l-1)$, and between vertexes $(r,l)$ and $(r,l-1)$ otherwise, where $\widehat{V}(r,l)$ is derived recursively by~\eqref{eq:dp}. We can check that $G$ is a tree rooted at $\widehat{V}(0,0)$, and that $\widehat{V}(T,N)$ is a leaf of $G$. If we can trace the path between $\widehat{V}(0,0)$ and $\widehat{V}(T,N)$, we can establish the optimal strategy. The problem of finding an optimal strategy thus maps to the problem of finding a path between $\widehat{V}(0,0)$ and $\widehat{V}(T,N)$. The last iteration of Algorithm~\ref{alg:baseline} can be regarded as the procedure of finding such path by tracing from the leaf $\widehat{V}(T,N)$ back to the root $\widehat{V}(0,0)$.
 
\subsection{Complexity Reduction via Preprocessing}

As an optimization to further reduce the complexity of Algorithm~\ref{alg:baseline}, we add a preprocessing phase identifying the resources that are guaranteed to be included (excluded, respectively) in any optimal strategy $\pi^*$. The preprocessing allows to reduce the size of the problem to be solved by Algorithm~\ref{alg:baseline}. Our preprocessing phase hinges on the structural properties of $\pi^*$ given in Lemma~\ref{lemma:aux3}, which hinges on another structural property given in Lemma~\ref{lemma:aux2} and Definition~\ref{def:dom}.

\begin{definition}[Dominance]
\label{def:dom}
Given a pair of resources $i$ and $j$, we say that $i$ dominates $j$ in $q$ (in $c$, respectively) if $q_i\le q_j$ ($c_i\le c_j$). We say that $i$ dominates $j$ if $i$ dominates $j$ in both $q$ and $c$. The dominance is said to be strict if at least one inequality holds strictly.
\end{definition}

It can be noted that dominance defined above is a partial order, and that dominance in $q$ and in $c$ are total orders.

\begin{lemma}
\label{lemma:aux2}
Given any strategy $\pi$, if there exists a resource $\pi_k\in\pi$ dominated by another resource $\pi_k'\notin\pi$, it holds that $V(\pi)\ge V(\pi')$, where $\pi'$ denotes the strategy by replacing $\pi_k$ by $\pi_k'$ and keeping the other resources and their order as in $\pi$, i.e., $\pi'_j=\pi_j$ for $j\ne k$.
\end{lemma}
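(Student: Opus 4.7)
The plan is to compute $V(\pi) - V(\pi')$ directly using the closed-form expression from Lemma~\ref{lemma:v}(1), exploit the fact that the two strategies are identical outside position $k$, and then read off the sign from the dominance hypotheses $q_{\pi_k'}\le q_{\pi_k}$ and $c_{\pi_k'}\le c_{\pi_k}$.

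Concretely, I would introduce the abbreviations $A\triangleq \prod_{j=1}^{k-1} q_{\pi_j}$, $B\triangleq \prod_{i=k+1}^{n} q_{\pi_i}$, $S\triangleq \sum_{i=1}^{k-1}\bigl(\prod_{j=1}^{i-1} q_{\pi_j}\bigr) c_{\pi_i}$, and $C\triangleq \sum_{i=k+1}^{n}\bigl(\prod_{j=k+1}^{i-1} q_{\pi_j}\bigr) c_{\pi_i}$. Since $\pi$ and $\pi'$ agree in every coordinate except position $k$, the quantities $A$, $B$, $S$, $C$ are identical for both strategies. Then Lemma~\ref{lemma:v}(1) gives
\begin{equation*}
V(\pi) = S + A\,c_{\pi_k} + A\,q_{\pi_k}\bigl(B + C\bigr), \qquad V(\pi') = S + A\,c_{\pi_k'} + A\,q_{\pi_k'}\bigl(B + C\bigr),
\end{equation*}
so that
\begin{equation*}
V(\pi) - V(\pi') = A\bigl(c_{\pi_k} - c_{\pi_k'}\bigr) + A\bigl(q_{\pi_k} - q_{\pi_k'}\bigr)(B + C).
\end{equation*}

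Every factor $A$, $B$, $C$ is a product or sum of products of non-negative quantities (all $q_i\in[0,1]$ and all $c_i\ge 0$), hence is non-negative; the edge cases $k=1$ (empty $A$, $S$) and $k=n$ (empty $B$, $C$) are handled by the usual empty-product/empty-sum conventions already adopted in Lemma~\ref{lemma:v}. The dominance hypothesis then gives $c_{\pi_k}-c_{\pi_k'}\ge 0$ and $q_{\pi_k}-q_{\pi_k'}\ge 0$, so both summands above are non-negative and $V(\pi)\ge V(\pi')$ follows.

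There is no real obstacle in this argument; the only thing to be careful about is the bookkeeping that separates the common prefix (indices $1$ through $k-1$) from the suffix (indices $k+1$ through $n$) in the expression from Lemma~\ref{lemma:v}(1), so that the two strategies can be compared term by term and the common factor $A$ can be pulled out cleanly. Once this factorization is in place, the conclusion follows immediately from the sign of each factor.
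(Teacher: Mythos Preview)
Your proposal is correct and follows essentially the same approach as the paper: both proofs expand $V(\pi)-V(\pi')$ via Lemma~\ref{lemma:v}(1), cancel the shared prefix and suffix terms, and read off non-negativity from the dominance hypotheses. Your abbreviations $A,B,S,C$ make the bookkeeping slightly tidier than the paper's inline computation, but the underlying argument is identical.
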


Lemma~\ref{lemma:aux2} is intuitive to understand in the sense that if resource $\pi_k'$ is superior to $\pi_k$ in both reward and cost, the user should choose $\pi_k'$ over $\pi_k$. Hence, $\pi^*$ never contains any resource dominated by another resource outside $\pi^*$. This intuition is further formalized and generalized in Lemma~\ref{lemma:aux3}.

\begin{lemma}
We sort the resources increasingly by $q_i$ and $c_i$, respectively. Let $y_i$ and $z_i$ denote the rank of resource $i$ based on the sorting of $q_i$ and $c_i$, respectively. The following two properties hold.
\begin{enumerate}
\item If $y_i+z_i\le T+1$, then $i\in\pi^*$.
\item If $y_i+z_i\ge N+T+1$, then $i\notin\pi^*$.
\end{enumerate}
\label{lemma:aux3}
\end{lemma}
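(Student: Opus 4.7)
The plan is to combine Lemma~\ref{lemma:aux2} (the dominance swap) with a counting argument on ranks. Writing $D_i \triangleq \{j\in\mathcal{N} : i \text{ dominates } j\}$ and $D_i' \triangleq \{j\in\mathcal{N} : j \text{ dominates } i\}$, note that $j\in D_i$ iff $y_j \ge y_i$ and $z_j\ge z_i$, while $j\in D_i'$ iff $y_j\le y_i$ and $z_j\le z_i$. Since the sortings by $q$ and by $c$ are total orders on $\mathcal{N}$, inclusion--exclusion immediately gives
$$|D_i|\ge (N-y_i+1)+(N-z_i+1)-N = N-y_i-z_i+2,\qquad |D_i'|\ge y_i+z_i-N.$$
These two bounds are the engine driving both parts.

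For part 1, assume $y_i+z_i\le T+1$, so $|D_i|\ge N-T+1$, i.e.\ at most $T-1$ resources fail to be dominated by $i$. Suppose for contradiction that $i\notin\pi^*$. If $|\pi^*|=T$, then $\pi^*$ contains $T$ resources distinct from $i$, at most $T-1$ of which lie outside $D_i$; by pigeonhole, some $\pi_k^*\in\pi^*$ is dominated by $i$. Lemma~\ref{lemma:aux2} then says the strategy $\pi'$ obtained by substituting $i$ for $\pi_k^*$ satisfies $V(\pi')\le V(\pi^*)$, giving an optimal strategy that contains $i$. If instead $|\pi^*|<T$, append $i$ at the end; a direct calculation using the first property of Lemma~\ref{lemma:v} gives
$$V(\pi^*\!\cdot\! i)-V(\pi^*)=\Bigl(\prod_{j=1}^{|\pi^*|}q_{\pi_j^*}\Bigr)(q_i+c_i-1)\le 0,$$
where the inequality uses the standing assumption $q_i+c_i\le 1$ justified by the third property of Lemma~\ref{lemma:v}. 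Either way, an optimal strategy containing $i$ exists.

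For part 2, assume $y_i+z_i\ge N+T+1$, so $|D_i'|\ge T+1$, meaning at least $T$ resources other than $i$ dominate $i$. Suppose for contradiction $i\in\pi^*$; then $|\pi^*\setminus\{i\}|\le T-1<T$, so among the $\ge T$ non-$i$ dominators of $i$ there must exist some $j\in D_i'\setminus\pi^*$. Lemma~\ref{lemma:aux2}, applied with the roles of dominator and dominated swapped, lets us replace $i$ by $j$ in $\pi^*$ to obtain $\pi'$ with $V(\pi')\le V(\pi^*)$, yielding an optimal strategy that does not contain $i$.

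The main obstacle is purely bookkeeping in the tie cases. Dominance in Definition~\ref{def:dom} is not required to be strict, so in both parts the swap via Lemma~\ref{lemma:aux2} only guarantees $V(\pi')\le V(\pi^*)$ rather than strict improvement; the conclusion should therefore be read as ``some optimal strategy does (respectively does not) contain $i$'', which is the standard interpretation when ties among $q_i$'s or $c_i$'s are not broken a priori. Apart from this, the key quantitative content reduces to the two inclusion--exclusion bounds on $|D_i|$ and $|D_i'|$, together with one pigeonhole step per part.
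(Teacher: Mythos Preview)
Your argument is correct and follows essentially the same route as the paper: both proofs count how many resources $i$ dominates (resp.\ is dominated by) via the ranks $y_i,z_i$, then invoke Lemma~\ref{lemma:aux2} to swap. Your inclusion--exclusion bounds $|D_i|\ge N-y_i-z_i+2$ and $|D_i'|\ge y_i+z_i-N$ are exactly the paper's counts once self-dominance is accounted for; you are slightly more explicit in handling the case $|\pi^*|<T$ (the paper tacitly assumes $|\pi^*|=T$, which is justified by the standing hypothesis $q_j+c_j\le 1$), and your remark that the conclusion should be read as ``some optimal strategy'' in the presence of ties is an appropriate clarification rather than a deviation.
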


Armed with Lemma~\ref{lemma:aux3}, we can develop a preprocessing procedure identifying all the resources satisfying the properties in Lemma~\ref{lemma:aux3}. We can then safely remove all resources identified by the preprocessing procedure, and only solve the remaining problem by Algorithm~\ref{alg:baseline} with reduced input size before reconstructing the optimal strategy by Lemma~\ref{lemma:optimal_property}. The preprocessing is straightforward to implement, whose pseudo-code is omitted here. The complexity of preprocessing is dominated by the sorting operation, which sums up to $O(N\log N)$ using heap-based implementation~\cite{Forsythe64}. 


To gain more quantitative insight on the benefit of preprocessing, we consider a system setting where  $q_i$ and $c_i$ for each resource $i$ are independently and randomly ranked in $[1,N]$. We can derive the percentage of resources removed by preprocessing. To this end, for any $i\in {\cal N}$, we have
\begin{align*}
    \Pr[y_i+z_i\le T+1]&=\sum_{y=1}^T\Pr[z_i\le T+1-y]\cdot \Pr[y_i=y] \\
    &=\sum_{y=1}^T\frac{T+1-y}{N}\cdot \frac{1}{N} =\frac{T(T+1)}{2N^2}.
\end{align*}
Symmetrically, we have 
\begin{align*}
    \Pr[y_i+z_i\ge N+T+1]&=\frac{(N-T)(N-T+1)}{2N^2}.
\end{align*}
Denote $\eta$ the percentage of resources removed by preprocessing in average and let $\beta\triangleq T/N$, asymptotically we have 
\begin{align*}
\eta &=\Pr[y_i+z_i\le T+1]+\Pr[y_i+z_i\ge N+T+1] \\
&\simeq \frac{\beta^2+(1-\beta)^2}{2}.
\end{align*}
Algebraically we have $0.25\le \eta\le 0.5$. The preprocessing procedure can thus reduce up to half of the total resources in the best case, thus halving the problem size passed to Algorithm~\ref{alg:baseline}. Even in the worst case where $T=N/2$ (corresponding to $\beta=0.5$), it can still filter out $25$\% resources.


\section{The Heterogeneous Case}
\label{sec:heterogeneous}

We proceed to the generic case where the access time is heterogeneous among resources. We develop our analysis by first demonstrating the hardness of the problem and then investigating the greedy strategy, followed by the development of a set of approximation algorithms approaching the minimal pseudo-cost with polynomial time and space complexity.

\begin{theorem}[\textsf{NP}-hardness of sequential resource access]
The sequential resource access problem in Definition~\ref{def:pb} is \textsf{NP}-hard.
\label{theorem:np}
\end{theorem}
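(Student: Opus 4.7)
My plan is to give a polynomial-time reduction from the 0/1 Knapsack problem, which is \textsf{NP}-hard. The starting point is the remark already made in Section~\ref{sec:formulation}: when every access cost vanishes, Lemma~\ref{lemma:v}(1) reduces the pseudo-cost to $V(\pi)=\prod_{i\in\pi} q_{\pi_i}$, so $V$ depends only on the \emph{subset} of accessed resources (not on the order), and minimizing it subject to $d(\pi)\le T$ is already a Knapsack-type combinatorial problem.

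Given a 0/1 Knapsack instance with integer values $v_i$, integer weights $w_i$, capacity $W$, and target value $K$, I would construct an SRA instance with $N=n$ resources, $d_i=w_i$, $c_i=0$, $T=W$, and $p_i=\varepsilon v_i$ for a small positive rational $\varepsilon$ to be fixed below. Then $V(\pi)=\prod_{i\in\pi}(1-\varepsilon v_i)$, and taking logarithms via $\log(1-x)=-x+O(x^{2})$ gives
\begin{equation*}
\log V(\pi)=-\varepsilon\sum_{i\in\pi}v_i+O\!\left(\varepsilon^{2} n\, v_{\max}^{2}\right).
\end{equation*}
Choosing for instance $\varepsilon=1/(4 n\, v_{\max}^{2})$---which uses only $O(\log n+\log v_{\max})$ bits and is therefore polynomial in the Knapsack input size---forces the leading term to dominate the remainder whenever two feasible subsets differ in total value by at least one unit. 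Consequently, the minimizer of $V(\pi)$ over SRA-feasible strategies coincides with the maximizer of $\sum_{i\in\pi} v_i$ over Knapsack-feasible subsets.

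To convert the resulting optimum into a Knapsack \emph{decision}, I would pick an explicit rational threshold $\theta$ strictly between the pseudo-costs induced by Knapsack values $K-1$ and $K$; its existence is guaranteed by the previous domination argument and it can be written down in closed form from $\varepsilon$, $K$, and $v_{\max}$. The Knapsack instance admits a subset of value at least $K$ if and only if the SRA optimum satisfies $V(\pi^{*})\le\theta$, and the whole reduction is polynomial in the bit-size of the input, so \textsf{NP}-hardness transfers to Problem~\ref{def:pb}. The main technical hurdle is the precision bookkeeping: one must certify that the $\varepsilon$-perturbation preserves the subset ordering of Knapsack values exactly, while $\varepsilon$ remains polynomially representable. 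This amounts to a routine Taylor-remainder estimate once $\varepsilon$ is chosen small enough, after which the rest of the argument is mechanical.
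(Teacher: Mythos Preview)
Your reduction is correct and arguably cleaner than the paper's, though the two take complementary routes. Both reduce from Knapsack and both rely on a small-parameter perturbation so that the SRA optimum inherits the Knapsack ordering, but they encode the Knapsack values in different parameters. The paper reduces from the special case $w_i=v_i$ (essentially subset-sum), sets all failure probabilities equal to a common $q=1-2/(N^{2}v_{\max})$, and puts the Knapsack values into the \emph{costs} via $c_i=v_{\max}-v_i$; the pseudo-cost then becomes $V(\pi)=q^{n}+\sum_i q^{i-1}(v_{\max}-v_{\pi_i})$, and the perturbation argument bounds $V(\pi^{*})-V(\pi')$ through the derivative of a polynomial in $q$. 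You go the opposite way: set every cost to zero so that $V(\pi)=\prod_{i\in\pi}q_i$ is order-independent, and put the values into the \emph{probabilities} via $p_i=\varepsilon v_i$; your perturbation is then a Taylor remainder on $\log(1-x)$. Your construction handles general Knapsack (arbitrary $w_i$) and avoids any discussion of the access order, which shortens the analysis; the paper's construction, by contrast, shows that hardness persists even when all $q_i$ are identical.

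One small caveat: the separating threshold $\theta$ you promise is not obviously a rational of polynomial bit-length, since the natural candidates involve $e^{-\varepsilon K}$. This is not a real obstacle---either truncate the two exponential bounds to $O(\varepsilon)$ absolute precision and pick a rational in the gap, or, more simply, drop the threshold altogether and argue a Turing reduction exactly as the paper does: an oracle for the SRA optimum returns a subset $\pi^{*}$, and one checks $\sum_{i\in\pi^{*}}v_i\ge K$ directly.
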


The proof of Theorem~\ref{theorem:np}, detailed in~\cite{techreport}, consists of relating the sequential resource access problem to the $0-1$ Knapsack problem which is known to be \textsf{NP}-hard~\cite{Williamson2011}. 
Given the \textsf{NP}-hardness of our problem, we explore two directions in the sequel analysis. The first is to develop specific strategies that perform optimally under certain conditions. The second is to design efficient approximation algorithms with bounded efficiency loss to the optimal utility. By efficient, we mean the algorithm has polynomial complexity in both time and space.

\subsection{The Greedy Strategy}

We consider the greedy strategy defined in Definition~\ref{def:greedy}, i.e., accessing the resources in the decreasing order of $p_i/c_i$ until success, or reaching time $T$. Theorem~\ref{theorem:optimality_greedy_heter} establishes the sufficient conditions for the optimality of the greedy strategy. 

\begin{theorem}[Optimality condition of greedy strategy]
If $c_i< c_{i+1}$, $d_i\le d_{i+1}$ and $\frac{p_i-p_{i+1}}{c_i-c_{i+1}}< 1$ hold for $1\le i\le N-1$, the greedy strategy is the only optimal strategy.
\label{theorem:optimality_greedy_heter}
\end{theorem}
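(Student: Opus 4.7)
The plan is to mirror the structure of the proof of Theorem~\ref{theorem:optimality_greedy}, exploiting the hypothesis $d_i\le d_{i+1}$ precisely where feasibility must be preserved. After sorting the resources by decreasing $p_i/c_i$ as in Definition~\ref{def:greedy}, let $K$ denote the largest integer with $\sum_{i=1}^{K}d_i\le T$. Because the $d_i$'s are non-decreasing in the sorted order, the first $K$ resources attain the minimum total delay over all $K$-element subsets; this yields two facts: (i) the greedy strategy is exactly $(1,2,\ldots,K)$, and (ii) every feasible strategy $\pi$ satisfies $|\pi|\le K$. I refer to $\{1,\ldots,K\}$ as the \emph{top} resources and to $\{K+1,\ldots,N\}$ as the \emph{non-top} resources.

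I would first re-establish the auxiliary identity $\frac{p_i-p_j}{c_i-c_j}<1$ for all $1\le i<j\le N$ by a telescoping argument that mirrors the one used in the proof of Theorem~\ref{theorem:optimality_greedy}. Arguing by contradiction, assume there exists an optimal strategy $\pi^*$ different from greedy and split into two exhaustive cases. In the \textbf{exchange case}, $\pi^*$ contains some non-top entry $\pi^*_k=j$ with $j>K$. Since $|\pi^*|\le K$ while $j\in\pi^*$, pigeonhole yields a missing top index $i\in\{1,\ldots,K\}\setminus\pi^*$. Build $\pi'$ by overwriting position $k$ of $\pi^*$ with $i$ and leaving every other entry unchanged. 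Feasibility of $\pi'$ follows from $d_i\le d_j$, which is precisely where the new hypothesis $d_i\le d_{i+1}$ is invoked, giving $d(\pi')\le d(\pi^*)\le T$. Since the pseudo-cost $V$ depends only on $\{p_\ell,c_\ell\}$ and is insensitive to delays, the algebraic manipulation of the homogeneous proof of Theorem~\ref{theorem:optimality_greedy} transfers verbatim and yields
\[
V(\pi^*)-V(\pi')\,\ge\,\Bigl(\prod_{\ell=1}^{k-1}q_{\pi^*_\ell}\Bigr)(c_j-c_i)\Bigl(1-\tfrac{p_j-p_i}{c_j-c_i}\Bigr)\,>\,0,
\]
where positivity uses $c_j>c_i$ (strict cost monotonicity) and the auxiliary identity above; this contradicts the optimality of $\pi^*$.

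In the \textbf{extension case}, every resource of $\pi^*$ is top, so $\pi^*\subseteq\{1,\ldots,K\}$; since $\pi^*$ differs from greedy we have $\pi^*\subsetneq\{1,\ldots,K\}$, and one can pick some $i\in\{1,\ldots,K\}\setminus\pi^*$. Form $\pi'$ by appending $i$ to $\pi^*$ and re-sorting by decreasing $p/c$ (justified by Lemma~\ref{lemma:optimal_property}); feasibility holds because $\pi'\subseteq\{1,\ldots,K\}$, whose total delay is at most $T$. A direct expansion of the pseudo-cost shows that appending $i$ at the end changes $V$ by $\bigl(\prod_{\ell\in\pi^*}q_\ell\bigr)(c_i-p_i)$, which is strictly negative since the theorem's ratio hypothesis forces $p_\ell-c_\ell$ to be strictly decreasing in $\ell$ while the standing assumption $q_\ell+c_\ell\le 1$ ensures $p_\ell\ge c_\ell$, together implying $p_i>c_i$ strictly for every top index $i\le K$. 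Combining this with Lemma~\ref{lemma:optimal_property} gives $V(\pi')<V(\pi^*)$, the desired contradiction. The main technical hurdle is the exchange case: adapting the homogeneous algebra so that $d_i\le d_j$ is invoked exactly to guarantee feasibility of the swap, while all the inequalities on $p$'s, $c$'s, and $q$'s carry over unchanged; the extension case and the auxiliary ratio identity are routine.
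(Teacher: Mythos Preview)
Your proposal is correct and follows essentially the same route as the paper: the exchange argument from the homogeneous Theorem~\ref{theorem:optimality_greedy}, with the hypothesis $d_i\le d_{i+1}$ invoked exactly where feasibility of the swap must be preserved. You are in fact a bit more careful than the paper's own Theorem~\ref{theorem:optimality_greedy} proof in treating the extension case (optimal strategy a strict subset of the top-$K$ resources) explicitly rather than leaving it implicit.
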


Theorem~\ref{theorem:optimality_greedy_heter} essentially demonstrates that the greedy strategy is optimal if (1) resources with higher access cost also incur higher delay, e.g., in the scenarios where the access cost is positively correlated or even proportional to the access delay, and (2) $p_i$ does not increase as much as $c_i$ among resources. 

More generically, if the access delay and cost are not positively correlated, the greedy strategy may not be optimal, motivating our following analysis on the most generic case.

\subsection{Approximation Algorithm Based on Dynamic Programming}

In contrast to the homogeneous case, our sequential resource access problem in the heterogeneous case is \textsf{NP}-hard. We thus concentrate on developing approximation algorithms achieving near-optimal performance. We first formalize in Definition~\ref{def:appro} the way we approximate an optimal strategy.

\begin{definition}[$(\epsilon,\delta)$-optimality]
A strategy $\pi$ is called $(\epsilon,\delta)$-optimal if $V(\pi)\le (1+\epsilon)V(\pi^*)$ and $d(\pi)\le (1+\delta)T$.
\label{def:appro}
\end{definition}

By Definition~\ref{def:appro}, we allow the total access time to slightly exceed the given time constraint $T$.  Our formulation makes sense if the delay constraint is not strictly inviolable so as to tolerate certain ``overflow''. The quantity of such overflow can be controlled by the parameter $\delta$. When $\epsilon=0$, $(0,\delta)$-optimality degenerates to the standard optimality with a $\delta$ relaxed constraint. When $\delta=0$, $(\epsilon,0)$-optimality degenerates to the classic $\epsilon$-optimality.
The focus of our work is the development and analysis of two polynomial algorithms outputting an $(\epsilon,\delta)$-optimal strategy. 


Our first approximation algorithm, presented in this subsection, extends from the dynamic programming approach in the homogeneous case. The core idea is to discretize the access delay to $O(1/\delta)$ values by scaling and rounding each $d_i$. This discretization step allows us to adapt the dynamic programming approach to the generic heterogeneous case. Specifically, our algorithm runs in two steps. To make the analysis concise without losing generality, we normalize $d_{min}\triangleq\min_{i\in{\cal N}} d_i=1$.

\textbf{Step 1: discretization}. Given $\delta > 0$, we set $\lambda \triangleq \lceil 1/\delta\rceil$ as a scaling parameter\footnote{In practice, $\delta$ is usually very small. We can thus conveniently approximate $\lambda$ as $1/\delta$ and treat it as a large integer.}, further replace each $d_i$ by $\widehat{d}_i\triangleq \left\lfloor d_i\lambda\right\rfloor/\lambda$, i.e., we round down the fractional part of $d_i$, $d_i-\lfloor d_i\rfloor$, to the closest fraction of the form $a/\lambda$ with $a<\lambda$ being an integer. 

\textbf{Step 2: dynamic programming}. Let $\widehat{T}\triangleq \left\lceil T/\delta\right\rceil$. For any pair of integers $(t,l)$, $0\le t\le \widehat{T}$, $1\le l\le N$, let $\widehat{V}(t,l)$ denote the minimal expected cost achievable by accessing a subset of resources among the first $l$ resources with a total delay at most $\delta\widehat{T}$, i.e., $\widehat{V}(t,l)\triangleq \min_{\pi\in \Pi, \widehat{d}(\pi)\le t\delta} V(\pi)$ where $\widehat{d}(\pi)\triangleq \sum_{i=1}^{|\pi|} \widehat{d}_{\pi_i}$. By extending the results in the homogeneous case, we can establish the recursive property concerning $\widehat{V}(t,l)$.
\begin{multline}
\widehat{V}(t,l)=\min\{c_l+q_l\cdot \widehat{V}(t-\widehat{d}_l\lambda,l-1),\widehat{V}(t,l-1)\}, \\
1\le t\le \widehat{T}, \ 1\le l\le N.
\label{eq:dp2}
\end{multline}

The pseudo-code of our approximation algorithm is described formally in Algorithm~\ref{alg:heter}. The only notable difference compared to the homogeneous case is the way how $\widehat{\pi}$ is established. Specifically, since we do not know the number of resources accessed in $\widehat{\pi}$, we reconstruct $\widehat{\pi}$ from the last element backward to the first and then reverse $\widehat{\pi}$, as depicted in the last line of Algorithm~\ref{alg:heter}, where \textsc{Reverse}($\widehat{\pi}$) denotes the operation of reversing $\widehat{\pi}$. 
Both the time and space complexity of Algorithm~\ref{alg:heter} is $O(N\widehat{T})$, i.e., $O(NT/\delta)$. Theorem~\ref{theorem:heter} establishes the $(0,2\delta)$-optimality of Algorithm~\ref{alg:heter}.

\begin{algorithm}
\caption{Approximation algorithm: heterogeneous case}
\label{alg:heter}
\begin{algorithmic}[1]
\State \textbf{Input}: $T$, $\{q_i,c_i\}_{1\le i \le N}$
\State \textbf{Output}: a $(0,2\delta)$-optimal strategy $\widehat{\pi}=\{\widehat{\pi}_i\}_{1\le i\le |\widehat{\pi}|}$

\vspace{0.5em}
\State Sort resources by decreasing $p_i/c_i$
\State $\lambda \leftarrow \lceil1/\delta\rceil$, $\widehat{T} \leftarrow \left\lceil T/\delta\right\rceil$
\For{$i=1$ \textbf{to} $N$}
    \State $\widehat{d}_i \leftarrow \left\lfloor d_i\lambda\right\rfloor/\lambda$
\EndFor

\For{$l=0$ \textbf{to} $N$, $t=0$ \textbf{to} $\widehat{T}$}
    \State $\widehat{V}(t,l)\leftarrow 1$
\EndFor

\vspace{0.5em}
\For{$l=1$ \textbf{to} $N$, $t=\widehat{d}_l\lambda$ \textbf{to} $\widehat{T}$, }
    \If{$c_l+q_l \widehat{V}(t-\widehat{d}_l\lambda,l-1)\le \widehat{V}(t,l-1)$}
        \State $\widehat{V}(t,l)\leftarrow c_l+q_l \widehat{V}(t-\widehat{d}_l\lambda,l-1)$
    \Else
        \State $\widehat{V}(t,l)\leftarrow \widehat{V}(t,l-1)$
    \EndIf
\EndFor

\vspace{0.5em}
\State $i\leftarrow 1$
\For{$l=N$ \textbf{to} $1$}
    \If{$c_l+q_l \widehat{V}(t-\widehat{d}_l\lambda,l-1)\le \widehat{V}(t,l-1)$}
        \State $\widehat{\pi}_i\leftarrow l$, $i\leftarrow i+1$, $t\leftarrow t-\widehat{d}_l\lambda$        
    \EndIf
\EndFor

\vspace{0.5em}
\State \textsc{Reverse}($\widehat{\pi}$)
\end{algorithmic}
\end{algorithm}

\begin{theorem}
Algorithm~\ref{alg:heter} outputs a $(0,2\delta)$-optimal strategy.
\label{theorem:heter}
\end{theorem}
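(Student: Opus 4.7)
The plan is to verify the two conditions of $(0, 2\delta)$-optimality separately: $V(\widehat{\pi}) \leq V(\pi^*)$ and $d(\widehat{\pi}) \leq (1 + 2\delta) T$.

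For the cost bound, I would first argue, by an induction on $l$ that mirrors the derivation for Algorithm~\ref{alg:baseline} (case-splitting on whether resource $l$ is included), that recurrence~\eqref{eq:dp2} correctly computes $\widehat{V}(t, l)$ as the minimum $V(\pi)$ over subsets $\pi \subseteq \{1, \ldots, l\}$ whose discretized delay satisfies $\lambda\, \widehat{d}(\pi) \leq t$. The key observation is that the discretization only shrinks delays, since $\widehat{d}_i = \lfloor \lambda d_i\rfloor / \lambda \leq d_i$. Hence the original optimum $\pi^*$ is still feasible in the discretized DP at $(t, l) = (\widehat{T}, N)$: one has $\widehat{d}(\pi^*) \leq d(\pi^*) \leq T$, and for $\widehat{T} = \lceil T/\delta\rceil$ and $\lambda = \lceil 1/\delta\rceil$ this falls below the DP budget $\widehat{T}/\lambda$. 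It follows that $V(\widehat{\pi}) = \widehat{V}(\widehat{T}, N) \leq V(\pi^*)$.

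For the delay bound, I would decompose
\[
d(\widehat{\pi}) \;=\; \widehat{d}(\widehat{\pi}) \;+\; \sum_{i=1}^{|\widehat{\pi}|} \bigl(d_{\widehat{\pi}_i} - \widehat{d}_{\widehat{\pi}_i}\bigr).
\]
Each rounding error is strictly less than $1/\lambda \leq \delta$, so the aggregate rounding contribution is at most $|\widehat{\pi}|/\lambda$. The DP feasibility constraint yields $\widehat{d}(\widehat{\pi}) \leq \widehat{T}/\lambda \leq T + \delta$. The normalization $d_{\min} = 1$ combined with $\lambda$ being a positive integer ensures $\widehat{d}_i \geq 1$ for all $i$, so $|\widehat{\pi}| \leq \widehat{d}(\widehat{\pi}) \leq T + \delta$. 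Assembling the pieces gives $d(\widehat{\pi}) \leq (T + \delta)\bigl(1 + 1/\lambda\bigr) \leq (T + \delta)(1 + \delta)$, which after a short algebraic step (using $T \geq 1$ and $\delta \leq 1$) is bounded by $(1 + 2\delta) T$.

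The main obstacle is the delicate bookkeeping in the delay bound: translating a per-resource rounding error of order $\delta$ into a total error of order $\delta T$ rather than the trivial $\delta N$. The key trick is to exploit the normalization $d_{\min} = 1$ to bound $|\widehat{\pi}| \leq \widehat{d}(\widehat{\pi})$, so that the DP budget $\widehat{T}/\lambda$ (and not the total resource count $N$) controls the number of accessed resources. This is essentially the sequential-access counterpart of the standard FPTAS argument for knapsack-type problems, with the extra care coming from the ceiling operations in $\lambda$ and $\widehat{T}$.
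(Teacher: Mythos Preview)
Your argument mirrors the paper's proof: the cost bound follows because rounding down delays keeps $\pi^*$ feasible in the discretized DP, and the delay bound comes from $d(\widehat\pi)\le\widehat d(\widehat\pi)+|\widehat\pi|/\lambda$ with $|\widehat\pi|$ controlled via the normalization $d_{\min}=1$. The only wrinkle is your final inequality $(T+\delta)(1+\delta)\le(1+2\delta)T$, which actually requires $T\ge1+\delta$ rather than just $T\ge1$; the paper sidesteps this by bounding $|\widehat\pi|\le T$ directly (so that $|\widehat\pi|/\lambda\le\delta T$) and $\widehat d(\widehat\pi)\le(1+\delta)T$, which then sum cleanly to $(1+2\delta)T$.
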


To further reduce the complexity of Algorithm~\ref{alg:heter}, we can extend the preprocessing phase developed in the homogeneous case. It suffices to modify the definition of dominance by taking into account the heterogeneous access delay as below. Given a pair of resources $i$ and $j$, we say that $i$ dominates $j$ in $q$ (in $c$, $d$, respectively) if $q_i\le q_j$ ($c_i\le c_j$, $d_i\le d_j$). We say that $i$ dominates $j$ if $i$ dominates $j$ in $q$, $c$ and $d$. 

\subsection{Improved Approximation Algorithm}

By examining Algorithm~\ref{alg:heter}, we observe that the 2-dimensional table $\widehat{V}$ is not an efficient data structure for our problem, as some of the entries are not needed to
determine an optimal strategy. Motivated by this observation, we develop an improved algorithm, whose key idea is exposed as follows. 


In addition to discretize $d_i$, we also discretize the cost $V(\pi)$ for any strategy $\pi$ by setting $\mu \triangleq \lceil n^*/\epsilon c_{min}\rceil$ and replacing $V(\pi)$ by $\widehat{V}(\pi)\triangleq \left\lfloor V(\pi)\mu\right\rfloor/\mu$, where  $n^*\triangleq|\pi^*|$ denotes the number of resources accessed in $\pi^*$.\footnote{Without introducing much ambiguity, we use the same notation $\widehat{V}$ as in Algorithm~\ref{alg:heter}, as they essentially play similar roles in both algorithms.} Recall that $d_{min}=1$, we can loosely upper-bound $n^*$ by $T$ and set $\mu=T/\epsilon c_{min}$.

We then generate a list $\Gamma$ of all feasible pairs of $(\widehat{d}(\pi), \widehat{V}(\pi))$ ($\pi\in\Pi$) where $\widehat{d}(\pi)\le T$. Technically, $\Gamma$ can be generated in $N$ iterations. Initially we put $(0,1)$ in $\Gamma$. At iteration $l$, from each pair $(\widehat{d}(\pi),\widehat{V}(\pi))$, we generate another pair $\left(\widehat{d}(\pi)+\widehat{d}_l, 
\left\lfloor(c_l+q_l\widehat{V}(\pi))\mu\right\rfloor/\mu\right)$ if $\widehat{d}(\pi)+\widehat{d}_l\le T$ based on the recursive property of $\widehat{d}(\pi)$ established in~\eqref{eq:dp2}, and add the new pair to $\Gamma$ if it does not duplicate any existing pair. By doing so, at the end of iteration $l$, each pair in $\Gamma$ represents a strategy of accessing a subset of resources among the first $l$ resources, whose total access time is upper-bounded by $T$, and inversely, each such strategy is represented by a pair. Once $\Gamma$ is established, we return the strategy $\widehat{\pi}$ corresponding to the pair with minimal discretized cost $\widehat{V}(\widehat{\pi})$ as an approximate optimal strategy. 

The above algorithm can be further improved by noting that not all pairs in $\Gamma$ are needed to derive an approximate optimal strategy. In fact we can safely remove a pair $(\widehat{d}(\pi), \widehat{V}(\pi))$ if there exists another pair $(\widehat{d}(\pi'), \widehat{V}(\pi'))$ in $\Gamma$ such that the former is dominated by the latter in the sense $\widehat{d}(\pi)\ge \widehat{d}(\pi')$ and $\widehat{V}(\pi)\ge \widehat{V}(\pi')$. After eliminating all dominated pairs, each
remaining pair $(\widehat{d}(\pi), \widehat{V}(\pi))$ is Pareto-optimal by satisfying the following conditions at the end of iteration $l$: (1) $\widehat{d}(\pi)$ is the smallest access delay with access cost $\widehat{V}(\pi)$ if only the first $l$ resources are allowed to be accessed, (2) $\widehat{V}(\pi)$ is the smallest access cost with access delay $\widehat{d}(\pi)$ if only the first $l$ resources are allowed to be accessed.

Based on the above idea, we can now modify our improved algorithm by adding a pruning phase at the end of each iteration. We sort the pairs in $\Gamma$ in strictly increasing order of $\widehat{d}$ and in strictly decreasing order of $\widehat{V}$ due to elimination of dominated pairs. In iteration $l$, we produce a new list $\Gamma'$ as follows: for each pair $(\widehat{d}(\pi), \widehat{V}(\pi))\in \Gamma$, we add a pair $\left(\widehat{d}(\pi)+\widehat{d}_l, 
\left\lfloor(c_l+q_l\widehat{V}(\pi))\mu\right\rfloor/\mu\right)$ to $\Gamma'$ if $\widehat{d}(\pi)+\widehat{d}_l\le T$. Since the pairs in $\Gamma$ is in increasing order of $\widehat{d}(\pi)$, the process of establishing $\Gamma'$ can be terminated whenever a pair $(\widehat{d}(\pi), \widehat{V}(\pi))$ in $\Gamma$ is reached, for which $\widehat{d}(\pi)+\widehat{d}_l(\pi)>T$. Once $\Gamma'$ is established, we merge $\Gamma'$ to $\Gamma$ by removing dominated pairs. This is easily accomplished given the strict ordering of $\widehat{d}(\pi)$ and $\widehat{V}(\pi)$ in the list. At the end of the last iteration (iteration $N$), the last pair in $\Gamma$ gives the cost and the corresponding access delay for an approximated optimal solution we look for.
To reconstruct the set of resources to access, we can simply add a pointer to each pair, pointing to the parent pair from which the current pair is generated. Via these pointers we can trace back from the last pair in $\Gamma$ to the first pair $(0,1)$, and reconstruct the set of resources to access.  

The pseudo-code of the improved approximation algorithm is described in Algorithm~\ref{alg:heter2}. The algorithm is mainly composed of two iterations, except that of initialization. The first iteration (line~\ref{line:for_start} to line~\ref{line:for_end}) establishes $\Gamma$ as described above. The second iteration (the \textbf{while} loop) reconstructs the entire set of resources to access. In our algorithm, the following elementary functions are used. They are straightforward to implement and are omitted here for briefness.
\begin{itemize}
    \item \textsc{Insert}($e$, $\gamma$) inserts the element $e$ in the list $\gamma$.
    \item \textsc{Merge}($\Gamma$, $\Gamma'$) merges $\Gamma'$ into $\Gamma$, removing dominated pairs and sorting the resulting list. Given the structure of $\Gamma$, where the pairs are in increasing order of $\widehat{d}(\pi)$ and decreasing order of $\widehat{V}(\pi)$, \textsc{Merge}($\Gamma$, $\Gamma'$) can be implemented in linear time to the size of $\Gamma$ and $\Gamma'$.
    \item \textsc{Last}($\Gamma$) returns the last element in the list $\Gamma$.
    \item \textsc{Find}($\gamma$, $ptr$) returns the resource such that by accessing it, the pair (access delay, cost) changes from the values in the pair pointed by $ptr$ to the values in $\gamma$. This can be achieved via a standard search in $O(\log N)$ time if we provide a sorted list of resources based on access delay.
\end{itemize}

\begin{algorithm}
\caption{Improved approximation algorithm}
\label{alg:heter2}
\begin{algorithmic}[1]
\State \textbf{Input}: $T$, $\{q_i,c_i\}_{1\le i \le N}$
\State \textbf{Output}: an $(\epsilon,\delta)$-optimal strategy $\widehat{\pi}=\{\widehat{\pi}_i\}_{1\le i\le |\widehat{\pi}|}$

\vspace{0.5em}
\State Sort resources by decreasing $p_i/c_i$
\State $\lambda \leftarrow \lceil1/\delta\rceil$, $\mu \leftarrow \lceil1/\epsilon c_{min}T\rceil$
\For{$i=1$ \textbf{to} $N$}
    \State $\widehat{d}_i \leftarrow \left\lfloor d_i\lambda\right\rfloor/\lambda$ \label{line:rounding}
\EndFor
\State $\Gamma\leftarrow (0,1,\emptyset)$

\vspace{0.5em}
\For{$l=1$ \textbf{to} $N$} \label{line:for_start}
    \State $\Gamma'\leftarrow \emptyset$
    \For{\textbf{each} $\gamma=(\widehat{d}(\pi),\widehat{V}(\pi),ptr(\pi))\in \Gamma$}
        \If{$\widehat{d}(\pi)+\widehat{d}_l>T$}
            \State \textbf{break}
        \Else   
            \State $ptr\leftarrow \&\gamma$
            \State \textsc{Insert}($\left(\widehat{d}(\pi)+\widehat{d}_l, 
\left\lfloor(c_l+q_l\widehat{V}(\pi))\mu\right\rfloor/\mu,\right.$ $\left.ptr\right)$, $\Gamma'$)
        \EndIf
    \EndFor
    \State \textsc{Merge}($\Gamma$, $\Gamma'$)\label{line:for_end}
\EndFor

\vspace{0.5em}
\State $i\leftarrow 1$, $\gamma\triangleq(\widehat{d}(\pi),\widehat{V}(\pi),ptr(\pi))\leftarrow$ \textsc{Last}($\Gamma$)
\While{$ptr(\pi)\ne\emptyset$}
    \State $\widehat{\pi}_i\leftarrow$ \textsc{Find}($\gamma$, $ptr(\pi)$) 
    \State $\gamma\leftarrow ptr(\pi)$, $i\leftarrow i+1$ 
\EndWhile

\vspace{0.5em}
\State \textsc{Reverse}($\widehat{\pi}$)
\end{algorithmic}
\end{algorithm}

We next analyze the time and space complexity of Algorithm~\ref{alg:heter2}. In this regard, we first derive the maximal size of $\Gamma$ and $\Gamma'$. Note that (1) the finest granularity of the discretized access delay is $1/\lambda$, and the total delay is upper bounded by $T$, (2) the finest granularity of the discretized cost $V(\pi)$ is $1/\mu$, and $V(\pi)\le 1$ from Lemma~\ref{lemma:v}, and (3) all the dominated pairs are removed from $\Gamma$, which implicates that (3.1) for any discretized access delay $\widehat{d}$ there is at most one pair with delay $\widehat{d}$, (3.2) for any discretized cost $\widehat{v}$ there is at most one pair with cost $\widehat{v}$. Therefore, the maximal size of $\Gamma$ is $\min\{T/\lambda,1/\delta\}$, which, by injecting $\lambda$ and $\mu$, sums up to $\min\{|\pi^*|T/\epsilon c_{min},1/\delta\}$, where $|\pi^*|$ can be upper-bounded by $T$. By the construction of $\Gamma'$ in Algorithm~\ref{alg:heter2}, where each pair in $\Gamma$ generates at most one pair in $\Gamma'$, we can upper-bound the size of $\Gamma'$ also by $\min\{|\pi^*|T/\epsilon c_{min},1/\delta\}$. The overall time complexity of Algorithm~\ref{alg:heter2}, dominated by the second \textbf{for} loop, sums up to $O\left(N\min\{|\pi^*|T/\epsilon c_{min},1/\delta\}\right)$. Algorithm~\ref{alg:heter2} also needs to stock $\Gamma$ and $\Gamma'$. Hence the space complexity sums up to $O\left(\min\{T^2/\epsilon c_{min},1/\delta\}\right)$, if we do not count the space required to store the parameters. We remark that this is an order-of-magnitude space gain compared to Algorithm~\ref{alg:heter}.


Theorem~\ref{theorem:heter2} establishes the $(\epsilon,\delta)$-optimality of Algorithm~\ref{alg:heter2}.


\begin{theorem}
Algorithm~\ref{alg:heter2} gives an $(\epsilon,\delta)$-optimal strategy.
\label{theorem:heter2}
\end{theorem}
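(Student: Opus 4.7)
The plan is to bound the delay and the pseudo-cost of the output $\widehat{\pi}$ separately. For the delay part, since the discretization of $d_i$ in line~\ref{line:rounding} is identical to that of Algorithm~\ref{alg:heter}, and the inner loop admits a new pair only when $\widehat{d}(\pi)+\widehat{d}_l\le T$, we have $\widehat{d}(\widehat{\pi})\le T$. Exactly as in the proof of Theorem~\ref{theorem:heter}, writing $d_i-\widehat{d}_i\le 1/\lambda\le\delta$ and $|\widehat{\pi}|\le T$ (which follows from $\widehat{d}_i\ge 1$ under $d_{min}=1$), we obtain
\begin{equation*}
d(\widehat{\pi})\le \widehat{d}(\widehat{\pi})+|\widehat{\pi}|/\lambda\le T+T/\lambda\le (1+\delta)T.
\end{equation*}

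For the pseudo-cost bound I would first establish, by induction on the number of prepend-and-round updates, the two-sided error estimate $V(\pi)-|\pi|/\mu\le \widehat{V}(\pi)\le V(\pi)$ for every strategy $\pi$ whose stored score $\widehat{V}(\pi)$ arises in $\Gamma$: each floor discards at most $1/\mu$, and the factor $q_l\le 1$ prevents the accumulated error from growing across iterations. The core step is then the following invariant, proved by induction on $l$: at the end of the $l$-th outer iteration, for every strategy $\pi$ drawn from $\{1,\ldots,l\}$ with $\widehat{d}(\pi)\le T$, the list $\Gamma$ contains a pair $(\widehat{d}(\pi'),\widehat{V}(\pi'),\cdot)$ with $\widehat{d}(\pi')\le\widehat{d}(\pi)$ and $\widehat{V}(\pi')\le\widehat{V}(\pi)$. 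The base case is trivial since $\Gamma$ is seeded with $(0,1,\emptyset)$. In the inductive step, if $l\notin\pi$ the claim follows from iteration $l-1$ together with the fact that \textsc{Merge} never removes a pair without leaving a componentwise dominator; if $l\in\pi$ and $\pi''=\pi\setminus\{l\}$, the hypothesis at $l-1$ gives a pair dominating $\pi''$ in $\Gamma$, and the inner loop generates from it a new pair by prepending $l$ whose two coordinates are no larger than those obtained by prepending $l$ to $\pi''$, by monotonicity of $x\mapsto c_l+q_lx$ and of $y\mapsto\lfloor y\mu\rfloor/\mu$.

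Applying the invariant with $l=N$ and $\pi=\pi^*$ yields a pair in the final $\Gamma$ whose second coordinate is at most $\widehat{V}(\pi^*)\le V(\pi^*)$. Since $\widehat{\pi}$ is extracted from the pair of smallest $\widehat{V}$ in $\Gamma$, we obtain
\begin{equation*}
V(\widehat{\pi})\le \widehat{V}(\widehat{\pi})+|\widehat{\pi}|/\mu\le V(\pi^*)+T/\mu\le V(\pi^*)+\epsilon c_{min},
\end{equation*}
provided $\mu\ge T/(\epsilon c_{min})$ and $|\widehat{\pi}|\le T$. Combining with $V(\pi^*)\ge c_{\pi_1^*}\ge c_{min}$, which is immediate from the closed form of $V$ in Lemma~\ref{lemma:v} (with the trivial case $n^*=0$ handled separately), gives $\epsilon c_{min}\le \epsilon V(\pi^*)$ and hence $V(\widehat{\pi})\le (1+\epsilon)V(\pi^*)$, as required.

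The main obstacle is to handle the pruning inside \textsc{Merge} rigorously. Because pruning is performed in the rounded metric, a discarded pair may represent a strategy whose true $V$ is smaller than that of the surviving dominator; the invariant must therefore be stated and carried entirely in the rounded coordinates $(\widehat{d},\widehat{V})$, and the translation back to the true $V$ performed once, at the very end, via the per-step error bound $V(\pi)\le\widehat{V}(\pi)+|\pi|/\mu$. A secondary technical point is to fix the scaling parameter so that $T/\mu\le\epsilon c_{min}$; this is met by $\mu=\lceil T/(\epsilon c_{min})\rceil$ together with the crude estimate $n^*\le T$ used to replace $n^*$ by $T$ in the rounding.
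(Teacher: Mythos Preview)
Your argument is correct and in fact somewhat cleaner than the paper's. Both proofs handle the delay bound identically. For the cost bound, however, the paper tracks a \emph{single} witness pair through the iterations, comparing its \emph{true} pseudo-cost $V$ against $(1+l\epsilon/|\pi^*|)\,V(\pi^*(l))$, where $\pi^*(l)$ is the restriction of $\pi^*$ to indices $\le l$; the multiplicative error grows by one unit of $\epsilon/|\pi^*|$ at each iteration and the final bound is read off once all of $\pi^*$ has been processed. Your route is different: you keep the invariant purely in the rounded coordinates $(\widehat d,\widehat V)$ and for \emph{all} feasible subsets, so that pruning (which is performed in those same rounded coordinates) preserves the invariant exactly, and you pay the rounding error only once at the end via the two-sided estimate $V(\pi)-|\pi|/\mu\le \widehat V(\pi)\le V(\pi)$. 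This buys you a cleaner treatment of \textsc{Merge}: because domination and the invariant live in the same metric, a pruned pair is never needed again, whereas in the paper's argument one must separately argue that replacing a pair by a rounded-coordinate dominator does not inflate the \emph{true} $V$ by more than $1/\mu$. Your additive error $T/\mu\le\epsilon c_{\min}$ together with $V(\pi^*)\ge c_{\min}$ then closes the bound directly. The only point to state a bit more explicitly is that the iteratively rounded $\widehat V(\pi^*)$ you compare against is the one obtained by processing the resources of $\pi^*$ in the algorithm's order, so that $\widehat V(\pi^*)\le V(\pi^*)$ indeed holds by the one-sided floor inequality; you implicitly use this when you write ``$\widehat V(\pi^*)\le V(\pi^*)$''.
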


\begin{table*}[htbp]
\begin{tabular}{p{7cm} p{2.7cm} p{3.4cm} p{3cm}}
\toprule
\textbf{Algorithm}                                                                                 & \textbf{Optimality}                                                                                         & \textbf{Time complexity} & \textbf{Space complexity} \\ \toprule
Greedy algorithm                                                                                   & optimal under certain condition                               & $O(T)$                                                              & $O(T)$                                                               \\ 
\hline
DP-based algorithm (Algo.~\ref{alg:baseline}): homogeneous case        & optimal                                                                                                     & $O(NT)$                                                             & $O(NT)$                                                              \\ \hline
Approximation algorithm  (Algo.~\ref{alg:heter})                                                                  & $(0,2\delta)$-optimal                                                                                        & $O(NT/\delta)$                                                       & $O(NT/\delta)$                                                        \\ \hline
Improved approximation  algorithm (Algo.~\ref{alg:heter2})             & $(\epsilon,\delta)$-optimal                                                                                 & $O\left(N\min\left\{\frac{T^2}{\epsilon c_{min}},\frac{1}{\delta}\right\}\right)$              & $O\left(\min\left\{\frac{T^2}{\epsilon c_{min}},\frac{1}{\delta}\right\}\right)$                \\ \hline
\end{tabular}
\vspace{-0.0cm}
\caption{Comparison of different algorithms developed in our work}
\label{tab:comparison}
\end{table*}

We conclude this section by summarizing and comparing the different algorithms we develop in Table~\ref{tab:comparison}. For the time and space complexity, the time and space required to sort and store the parameters of each resource (e.g., $p_i$, $c_i$, $d_i$) are not taken into account. The minimal access delay $d_{min}$ is normalized to $1$.

\section{Numerical Analysis}
\label{sec:simu}

In this section, we conduct numerical analysis to evaluate the performance of our approximation algorithm. We
evaluate our algorithm against the following three strategies:
\begin{itemize}
    \item \textbf{Randomized strategy}: the user accesses a randomly chosen resource each time until success or reaching $T$;
    \item \textbf{Greedy-$p$}: the user accesses the resources in decreasing order of $p_i$ until success or the end of $T$;
    \item \textbf{Greedy-$c$}: the user accesses the resources in decreasing order of $c_i$ until success or the end of $T$;
    \item \textbf{Greedy-$p/c$}: the user accesses the resources in decreasing order of $p_i/c_i$ until success or the end of $T$; this strategy is the greedy strategy analyzed in Section~\ref{sec:homogeneous} and~\ref{sec:heterogeneous}.
\end{itemize}

Specifically, we simulate a system consisting of $N=20$ resources, where each $p_i$ is randomly chosen from $[0,1]$, $c_i$ randomly from $[0,p_i]$, $d_i=1$ for the homogeneous case and randomly chosen from $[0,2]$ in the heterogeneous case, $T$ varies from $4$ to $16$. We take the randomized strategy as the baseline and trace the ratio between the utility of the other simulated strategies and the utility of the randomized strategy, denoted by $\Psi$. In other words, $\Psi$ quantifies the performance gain of the simulated strategy over the randomized baseline strategy. For our algorithm, we run Algorithm~\ref{alg:heter2} with both $\delta$ and $\epsilon$ set to $0.01$. Figure~\ref{fig:homo} and~\ref{fig:heter} traces the average of $\Psi$ over $10000$ runs for the homogeneous and heterogeneous cases, respectively. We make the following observations.

Our algorithm performs constantly and significantly better than all the other strategies. This is coherent to the theoretical results as our algorithm is proved to approach the system optimum. 

Among the other strategies, none outperforms the others in all the simulated scenarios. We can only observe that, in the average sense, greedy-$p$ and greedy-$p/c$ performs better with small $T$, while greedy-$c$ performs better with large $T$. However, this performance trade-off depends on the system parameters.  

The performance gain of our algorithm is more significant where approximately half of the resources can be selected (i.e.,  $T=8$ and $12$). This is also the least tractable cases, since when the number of selected resources approaches $N/2$, the number of choices are maximized, hence the optimum strategy may significantly outperforms a heuristic one, as demonstrated by our results.

\begin{figure}[ht]
\centering
\subfigure{%
\includegraphics[width=0.45\textwidth]{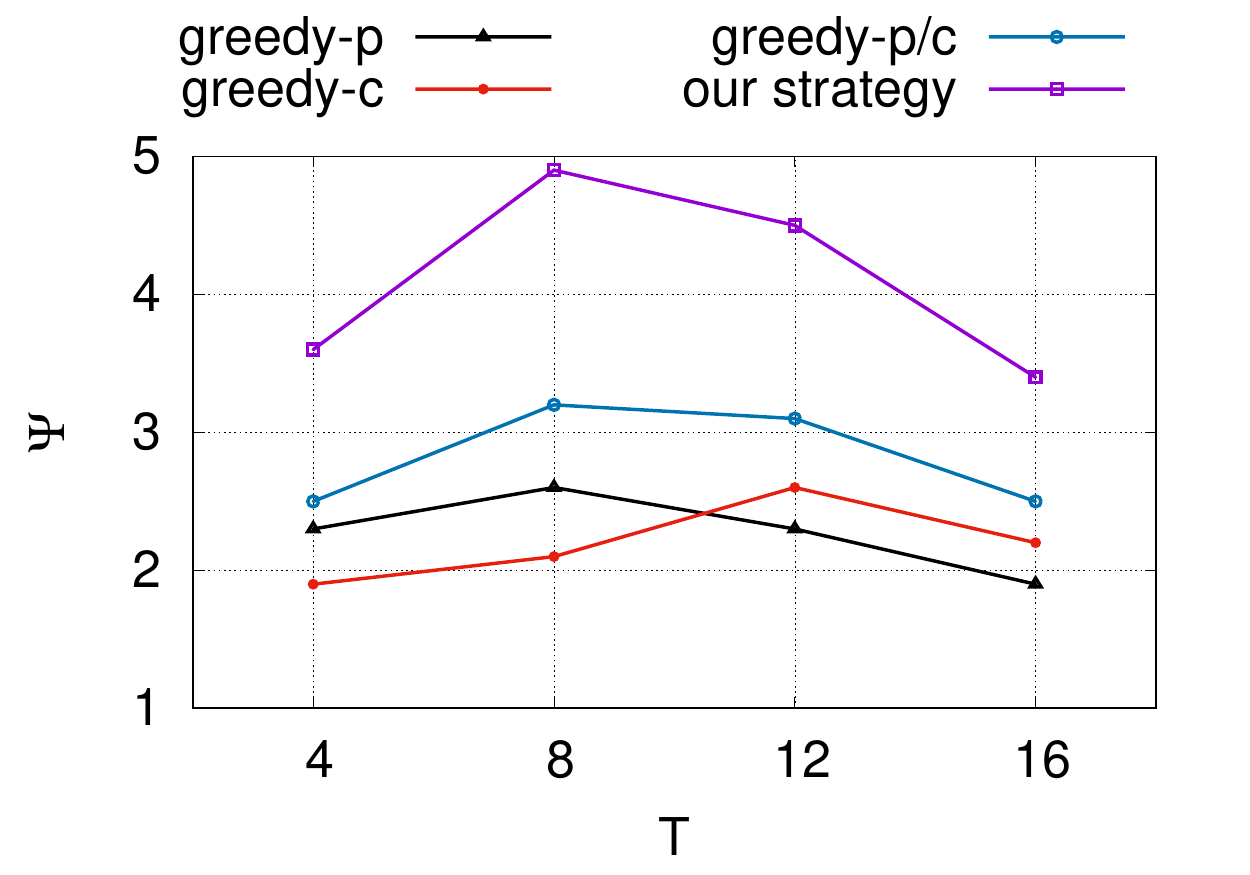}
\label{fig:homo}}
\subfigure{%
\includegraphics[width=0.45\textwidth]{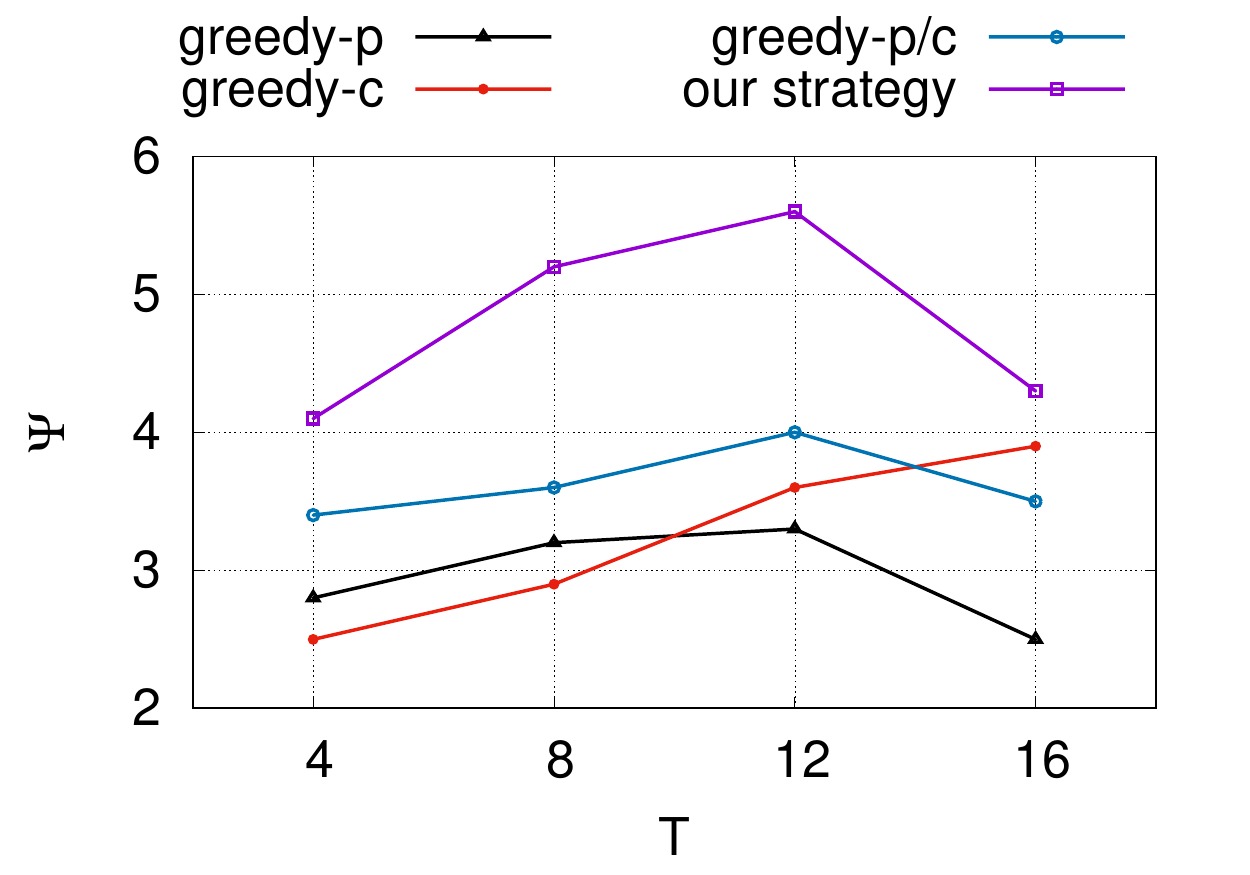}
\label{fig:heter}}
\vspace{-0.2cm}
\caption{Performance of different access strategies: heterogeneous (top) and heterogeneous (bottom) cases.}
\label{fig:delay}
\end{figure}

\section{Related Work}
\label{sec:related_work}

The problem we tackle in this paper is related to the sequential decision making problem, among which perhaps the most pertinent example to our context is the classic optimal stopping problem~\cite{Shiryaev2011}. Stemmed from the famous secretary problem dating back to the late 1950's, the theory of optimal stopping has received a lot of research attention by constituting today a field of study in applied mathematics and statistics~\cite{10.2307/2245639}. Briefly speaking, the theory of optimal stopping is concerned with the problem of deciding when to stop based on sequentially observed random variables in order to maximize an expected payoff. Interested readers may refer to~\cite{stopping2,Shiryaev2011} for a detailed treatment. Compared to the standard optimal stopping problem where only a binary action, stopping or proceeding to observe, needs to be taken upon each
observation, our problem has a more combinatorial flavor as the decision maker needs to choose which resources to probe and in what sequence. In contrast, our problem does not have the stochastic component as in the optimal stopping problem since the system parameters are known \textit{a priori}.   

From an application point of view, since the last decade, there has been a surge of interest in using optimal stopping theory in emerging communication and networking paradigms~\cite{icn-forward,9014081,9052123,5233772,5992834,5738213}. These studies consider a variety of scenarios where a user seeks a sequential strategy that probes a subset of resources with potentially unknown distribution of states so as to maximize her utility. The considered problem can be cast to the optimal stopping problem by regarding resource states as the observable variable. Two variants have been investigated depending on whether the user is allowed to use a previously probed resource or not. Under certain system setting, the optimal probing policy has a threshold structure and can be derived in a subset of special cases (e.g., with two resources)~\cite{5233772}. Compared to these works addressing concrete application examples with particular constraints and specificity therein, our work take a more generic view without sacrificing the complexity and intrinsic structure of the problem. Therefore, we believe that our analysis, despite being generic, can serve as a complementary line of research, and provide valuable insights in a variety of resource access problems in emerging networking and computing systems.



\section{Conclusion and Perspective}
\label{sec:conclusion}

We have formulated and analyzed a generic resource access problem, where a user, having access to a set of resources with heterogeneous success probability, access cost and latency, seeks an optimal strategy to achieve the best trade-off between limiting the access cost and maximizing the overall success probability. We proved that the problem of finding an optimal strategy is \textsf{NP}-hard. Given its \textsf{NP}-hardness, we presented a greedy strategy that can be implemented in linear time and mathematically establish sufficient conditions for its optimality. We then developed a series of polynomial-time approximation algorithms approaching an optimum solution in the sense of $(\epsilon,\delta)$-optimality.

A natural generalization of our current work, as mentioned in the Introduction, is to extend our algorithmic framework to investigate the case with correlated resources. This is an important facet of the problem, and will provide valuable insight on how to exploit the correlation among resources to improve the performance. Another direction is to investigate the scenario, where the user is allowed to access up to $C$ resources simultaneously each time, and can perform multiple rounds of queries, as long as the total delay does not exceed $T$. This direction adds more combinatorial flavor to the problem.

\bibliographystyle{abbrv}
\bibliography{bibfile}

\end{document}